\newtheorem{definition}{Definition}
\newtheorem{proposition}[definition]{Proposition}
\newtheorem{theorem}[definition]{Theorem}
\newtheorem{observation}[definition]{Observation}
\def\squareforqed{\hbox{\rlap{$\sqcap$}$\sqcup$}}
\def\qed{\ifmmode\squareforqed\else{\unskip\nobreak\hfil
\penalty50\hskip1em\null\nobreak\hfil\squareforqed
\parfillskip=0pt\finalhyphendemerits=0\endgraf}\fi}
\def\endenv{\ifmmode\;\else{\unskip\nobreak\hfil
\penalty50\hskip1em\null\nobreak\hfil\;
\parfillskip=0pt\finalhyphendemerits=0\endgraf}\fi}
\newenvironment{proof}{\noindent \textbf{{Proof~} }}{\qed}
\mathchardef\ordinarycolon\mathcode`\:
\def\vcentcolon{\mathrel{\mathop\ordinarycolon}}
\newcommand{\nc}{\newcommand}
\nc{\rnc}{\renewcommand}
\nc{\beq}{\begin{equation}}
\nc{\eeq}{{\end{equation}}}
\nc{\beqa}{\begin{eqnarray}}
\nc{\eeqa}{\end{eqnarray}}
\nc{\lbar}[1]{\overline{#1}}
\nc{\bra}[1]{\langle#1|}
\nc{\ket}[1]{|#1\rangle}
\nc{\ketbra}[2]{|#1\rangle\!\langle#2|}
\nc{\braket}[2]{\langle#1|#2\rangle}
\nc{\proj}[1]{| #1\rangle\!\langle #1 |}
\nc{\avg}[1]{\langle#1\rangle}
\nc{\Rank}{\operatorname{Rank}}
\nc{\smfrac}[2]{\mbox{$\frac{#1}{#2}$}}
\nc{\tr}{\operatorname{Tr}}
\nc{\ox}{\otimes}
\nc{\dg}{\dagger}
\nc{\dn}{\downarrow}
\nc{\cA}{{\cal A}}
\nc{\cB}{{\cal B}}
\nc{\cC}{{\cal C}}
\nc{\cD}{{\cal D}}
\nc{\cE}{{\cal E}}
\nc{\cF}{{\cal F}}
\nc{\cG}{{\cal G}}
\nc{\cH}{{\cal H}}
\nc{\cI}{{\cal I}}
\nc{\cJ}{{\cal J}}
\nc{\cK}{{\cal K}}
\nc{\cL}{{\cal L}}
\nc{\cM}{{\cal M}}
\nc{\cN}{{\cal N}}
\nc{\cO}{{\cal O}}
\nc{\cP}{{\cal P}}
\nc{\cR}{{\cal R}}
\nc{\cS}{{\cal S}}
\nc{\cT}{{\cal T}}
\nc{\cX}{{\cal X}}
\nc{\cZ}{{\cal Z}}
\nc{\csupp}{{\operatorname{csupp}}}
\nc{\qsupp}{{\operatorname{qsupp}}}
\nc{\var}{{\operatorname{var}}}
\nc{\rar}{\rightarrow}
\nc{\lrar}{\longrightarrow}
\nc{\polylog}{{\operatorname{polylog}}}
\nc{\wt}{{\operatorname{wt}}}
\nc{\av}[1]{{\left\langle {#1} \right\rangle}}
\nc{\RR}{{{\mathbb R}}}
\nc{\CC}{{{\mathbb C}}}
\nc{\FF}{{{\mathbb F}}}
\nc{\NN}{{{\mathbb N}}}
\nc{\ZZ}{{{\mathbb Z}}}
\nc{\PP}{{{\mathbb P}}}
\nc{\QQ}{{{\mathbb Q}}}
\nc{\UU}{{{\mathbb U}}}
\nc{\EE}{{{\mathbb E}}}
\nc{\id}{{\operatorname{id}}}
\nc{\CHSH}{{\operatorname{CHSH}}}
\nc{\be}{\begin{equation}}
\nc{\ee}{{\end{equation}}}
\nc{\bea}{\begin{eqnarray}}
\nc{\eea}{\end{eqnarray}}
\nc{\Hom}[2]{\mbox{Hom}(\CC^{#1},\CC^{#2})}
\nc{\rU}{\mbox{U}}
\nc{\ob}[1]{#1}
\nc{\SEP}{{\text{SEP}}}
\nc{\sep}{{\text{sep}}}
\nc{\LOCC}{{\text{LOCC}}}
\nc{\PPT}{{\text{PPT}}}
\nc{\EXT}{{\text{EXT}}}
\nc{\ALL}{{\text{ALL}}}
\nc{\Sym}{{\operatorname{Sym}}}
\nc{\vertleq}{{\rotatebox[origin=c]{90}{$\leq$}}}
\begin{document}

\title{Distinguishing multi-partite states by local measurements}
\author{C\'{e}cilia Lancien}
\affiliation{Centre de Math\'{e}matiques Laurent Schwartz, \'{E}cole Polytechnique, 91128 Palaiseau Cedex, France}
\email{cecilia.lancien@polytechnique.edu}
\affiliation{Department of Mathematics, University of Bristol, Bristol BS8 1TW, U.K.}

\author{Andreas Winter}
\affiliation{ICREA \&{} F\'{\i}sica Te\`{o}rica: Informaci\'{o} i Fenomens Qu\`{a}ntics, Universitat Aut\`{o}noma de Barcelona, ES-08193 Bellaterra (Barcelona), Spain}
\email{der.winter@gmail.com}
\affiliation{Department of Mathematics, University of Bristol, Bristol BS8 1TW, U.K.}
\affiliation{Centre for Quantum Technologies, National University of Singapore, 2 Science Drive 3, Singapore 117542}

\date{6 December 2012}

\begin{abstract}
  We analyze the distinguishability norm on the states of a multi-partite
  system, defined by local measurements.
  Concretely, we show that the norm associated to
  a tensor product of sufficiently symmetric measurements is essentially
  equivalent to a multi-partite generalisation of the non-commutative
  $\ell_2$-norm (aka Hilbert-Schmidt norm): in comparing the two, the constants
  of domination depend only on the number of parties but not on the Hilbert
  spaces dimensions.

  We discuss implications of this result on the corresponding norms for
  the class of all measurements implementable by local operations and
  classical communication (LOCC), and in particular on the leading order
  optimality of multi-party data hiding schemes.
\end{abstract}

\maketitle


\section{Distinguishability norms}
\label{sec:intro}
The task of distinguishing quantum states from accessible experimental data
is at the heart of quantum information theory, appearing right at its
historical beginnings -- see \cite{Holevo,Helstrom}, and \cite{NielsenChuang}
for general reference.
Indeed, the special case on which we are focussing in this paper, the discrimination
of two states, is the generalisation of hypothesis testing in classical
statistics. There, the optimal discrimination between two hypotheses,
modelled as (for simplicity: discrete) probability distributions
$P_0$ and $P_1$, with prior probabilities $q$ and $1-q$, respectively, is
given by the maximum likelihood rule \cite{statistics}.
The minimum error probability is thus given by
\[
  \Pr\{\text{error}\} = \frac12\bigl( 1 - \| qP_0-(1-q)P_1 \|_1 \bigr),
\]
with the usual $\ell_1$-norm $\displaystyle{\| \Delta \|_1 = \sum_{x\in\cX} |\Delta_x|}$.

In this paper, we shall denote by the same symbol its non-commutative generalisation
$\| \Delta \|_1 = \tr |\Delta|$, i.e. the sum of the singular values of $\Delta$,
also known as trace norm.

Owing to the particular role played by \emph{measurement} in quantum
mechanics, however, any restriction on the set of available measurements
leads to a specific norm on density operators: any decision
in the discrimination task must be based on measurement results.
Specifically, let the two hypotheses be two quantum states (density operators)
$\rho_0$ and $\rho_1$ on some Hilbert space $\cH$,
with prior probabilities $q$ and $1-q$, respectively.
A generic measurement $M$, i.e. a positive operator valued measure (POVM,
aka partition of unity), is given by positive semidefinite operators
\[
  M_x \geq 0, \quad\text{s.t.}\quad \sum_{x\in\cX} M_x=\1.
\]
(In this paper, POVMs will generally be discrete and Hilbert spaces will always be of
finite dimension. With suitable adaptations to the proofs, however, our results
carry over to general POVMs and infinite dimension.)
The Born rule for measurements postulates that the state $\rho_i$
generates a distribution $P_i$ on the outputs of the measurement:
\[
  P_i(x) = \tr \rho_i M_x,
\]
and hence the minimum error probability in any decision based on $i$ is
\[\begin{split}
  \Pr\{\text{error}\} &=  \frac12\bigl( 1 - \| qP_0-(1-q)P_1 \|_1 \bigr) \\
                      &=  \frac12\left( 1 - \sum_{x\in\cX} \bigl|\tr (q\rho_0-(1-q)\rho_1)M_x\bigr| \right) \\
                      &=: \frac12\bigl( 1 - \| q\rho_0 - (1-q)\rho_1\|_M \bigr).
\end{split}\]
Observe that $\| \Delta \|_M = \sum_{x\in\cX} |\tr \Delta M_x|$ is a seminorm:
it is non-negative, homogeneous and obeys the triangle inequality. However, it may vanish on
$\Delta \neq 0$. This is excluded if the measurement $M$ is \textit{informationally
complete}, meaning that the operators $M_x$ span all the operators
over the Hilbert space: $\operatorname{span}\{ M_x:x\in\cX \} = \cB(\cH)$.

If not one but a whole set $\mathbf{M}$ of measurements is given, from which the
experimenter may choose, we have an equally natural (semi-)norm
\[
  \| \Delta \|_{\mathbf{M}} = \sup_{M\in\mathbf{M}} \| \Delta \|_M,
\]
in terms of which the minimum error probability is expressed
as $\frac12\bigl( 1 - \| q\rho_0 - (1-q)\rho_1\|_{\mathbf{M}} \bigr)$.
These norms, under certain restrictions of interest on the measurement,
will be the object of study in the present paper, and in particular
their comparison with the trace norm, which by a classic observation
of Holevo~\cite{Holevo} and Helstrom~\cite{Helstrom} equals the distinguishability norm
under the set of all possible measurements:
\[
  \| \Delta \|_{\mathbf{ALL}} = \sup_{M\text{ any POVM}} \| \Delta \|_M
                      = \| \Delta \|_1
                      = \tr |\Delta|.
\]

In this spirit, we continue an investigation begun in~\cite{MWW}, addressing
some of the questions left open there. The reader is referred to that
paper for further information about distinguishability norms
and their interpretation in terms of the geometry of certain convex bodies
of operators. Note however that many results from~\cite{MWW} are restricted
to traceless operators $\Delta = \frac12(\rho_0-\rho_1)$, corresponding
to equal prior probabilities $q=1-q=\frac12$. Of course, mathematically
and also in view of applications with unequal prior probabilities,
it makes sense to lift this restriction.

\medskip
The structure of the rest of the paper is as follows:
In section~\ref{sec:t-designs} we define the measurements and some
classes of measurements we will be interested in, introducing also a
multi-partite generalisation of the non-commutative $\ell_2$-norm
(aka Hilbert-Schmidt norm), denoted $\|\cdot\|_{2(K)}$.
In section~\ref{sec:2-norms} we then state and prove our main results
comparing measurement norms with $2$-norms, while in section~\ref{sec:1-norm}
we move on to relations with the trace norm and the application of
our results to so-called \emph{data hiding}.
We conclude in section~\ref{sec:conclusion} with a brief discussion.
Appendix~\ref{app:4-design-proofs} is devoted to the technical parts of
the proof of the main result, building on ideas from~\cite{AmbainisEmerson,MWW}.

\section{Projective $\mathbf{t}$-designs and LOCC measurements}
\label{sec:t-designs}
As explained in the introduction, if we want to use a single measurement
to define a norm it has to be \textit{informationally complete}. Among those,
there are measurements with special symmetry properties known as
\emph{(projective) designs} -- see~\cite{zauner,t-designs1,t-designs2}
and~\cite{AmbainisEmerson}.

\begin{definition}
  \label{def:t-design-POVM}
  A rank-one POVM $M=(M_x)_{x\in\cX}$ on a $d$-dimensional Hilbert space $\cH$
  is called a \emph{$t$-design} if
  the ensemble $\{ p_x, P_x \}$ of rank-one projectors, with
  $p_x = \frac{1}{d}\tr M_x$ and $P_x = \frac{M_x}{\tr M_x}$, is a
  projective (weighted) $t$-design in the usual sense \cite{zauner,t-designs1,t-designs2},
  i.e.~if
  \[
    \sum_{x\in\cX} p_x P_x^{\otimes t} = \int {\rm d}\psi\ \ket{\psi}\bra{\psi}^{\otimes t}.
  \]
  where the integral is over the uniform (unitary invariant) probability measure
  on the pure states of $\cH$.

  Note that
  \[
    \int {\rm d}\psi\ \ket{\psi}\bra{\psi}^{\otimes t} = \frac{1}{{d+t-1 \choose t}} \Pi_{\Sym}
                                       = \frac{1}{d(d+1)\cdots(d+t-1)} \sum_{\pi\in\mathfrak{S}_t} U_\pi,
  \]
  where $\Pi_{\Sym}$ is the projector onto the completely symmetric subspace of $\cH^{\otimes t}$,
  i.e. the subspace of $\cH^{\otimes t}$
  invariant under all the permutation unitaries
  $U_\pi \ket{v_1}\otimes\cdots\otimes\ket{v_t}
      = \ket{v_{\pi^{-1}(1)}}\otimes\cdots\otimes\ket{v_{\pi^{-1}(t)}}$.
\end{definition}

We shall be concerned with multi-partite quantum systems. To fix notation
for the rest of the paper,
let $\cH_1,...,\cH_K$ be $K$ finite dimensional Hilbert spaces
(with dimensions $d_j:=\dim \cH_j < \infty$), and $\cH=\cH_1\otimes \cdots \otimes \cH_K$
their tensor product (of dimension $D:=d_1\cdots d_K$).
Let furthermore $\Delta$ be a Hermitian operator on $\cH$.
The first measurements we shall be interested in, are tensor products
of $t$-designs: $M=M^{(1)} \otimes \cdots \otimes M^{(K)}$, where each
$M^{(j)}$ is a $t$-design POVM. In other words, the individual
elements of the partition of unity are all possible tensor products
$M^{(1)}_{x_1} \otimes \cdots \otimes M^{(K)}_{x_K}$. The following
observation makes it possible to use probabilistic techniques to analyse
the norm associated to a single measurement, paving the way to an
analysis of $t$-design measurements.

\begin{observation}
  \label{obs:POVM-bias}
  For a rank-one POVM $M=(M_x)_{x\in\cX}$ on a $d$-dimensional Hilbert space
  $\cH$, let $p_x = \frac{1}{d}\tr M_x$ and $P_x = \frac{M_x}{\tr M_x}$. Then,
  introducing a random index $X$ with $\Pr\{X=x\}=p_x$, $P_X$ is a random rank-one
  projector with expectation
  $\displaystyle{\EE P_X = \sum_{x\in\cX} p_x P_x = \frac{1}{d}\1}$. Furthermore,
  \[
    \|\Delta\|_M = d\,\EE|S|,
  \]
  for the real random variable $S = \tr\Delta P_X$.
  Indeed,
  \begin{equation*}
    d\,\EE|S| = d\sum_{x\in\cX} p_x |\tr \Delta P_x| = \sum_{x\in\cX} |\tr \Delta M_x| = \|\Delta\|_M.
  \end{equation*}
\end{observation}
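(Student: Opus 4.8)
The plan is to reduce both assertions to the single structural fact that each POVM element, being rank-one and positive semidefinite, factorizes as $M_x = (\tr M_x)\,P_x$, where $P_x = M_x/\tr M_x$ is the rank-one projector onto the range of $M_x$. Granting this, everything else is bookkeeping with the completeness relation $\sum_{x\in\cX} M_x = \1$. I would therefore begin by recording this factorization and checking that $\{p_x\}$ is a genuine probability distribution: since $M_x\geq 0$ we have $p_x = \frac{1}{d}\tr M_x \geq 0$, while $\sum_{x\in\cX} p_x = \frac{1}{d}\tr\bigl(\sum_{x\in\cX} M_x\bigr) = \frac{1}{d}\tr\1 = 1$, so that the random index $X$ with $\Pr\{X=x\}=p_x$ is well defined and $P_X$ is a genuine random rank-one projector.

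For the expectation, I would substitute the factorization into the definition of the average of $P_X$. Since $p_x P_x = \frac{1}{d}(\tr M_x)\,\frac{M_x}{\tr M_x} = \frac{1}{d}M_x$, one gets
\[
  \EE P_X = \sum_{x\in\cX} p_x P_x = \frac{1}{d}\sum_{x\in\cX} M_x = \frac{1}{d}\1,
\]
the last equality being the POVM completeness relation once more; this is the entire content of the first displayed identity.

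For the norm identity, I would start from the definition $\|\Delta\|_M = \sum_{x\in\cX}|\tr\Delta M_x|$ and pull the scalar $\tr M_x = d\,p_x$ out of the trace, writing $\tr\Delta M_x = (\tr M_x)\,\tr\Delta P_x = d\,p_x\,(\tr\Delta P_x)$. Summing absolute values then yields
\[
  \|\Delta\|_M = \sum_{x\in\cX} d\,p_x\,|\tr\Delta P_x| = d\sum_{x\in\cX} p_x\,|\tr\Delta P_x| = d\,\EE|S|,
\]
with $S = \tr\Delta P_X$. It remains only to note that $S$ is real, which holds because $\Delta$ and every $P_x$ are Hermitian: $\overline{\tr\Delta P_x} = \tr\bigl((\Delta P_x)^\dagger\bigr) = \tr(P_x\Delta) = \tr(\Delta P_x)$ by cyclicity of the trace.

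I do not expect any genuine obstacle here: the observation is an unwinding of definitions, and the only two points deserving an explicit word are the factorization $M_x = (\tr M_x)P_x$, which is valid precisely because the POVM is assumed rank-one, and the reality of $S$, which is immediate from Hermiticity. The value of the statement is conceptual rather than technical: it repackages the seminorm $\|\cdot\|_M$ as $d$ times the expected modulus of a single real random variable $S$, which is exactly the form required to later bring probabilistic and concentration-of-measure techniques to bear on $t$-design measurements.
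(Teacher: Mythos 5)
Your proof is correct and follows essentially the same route as the paper's own one-line verification: both rest on the factorization $M_x = (\tr M_x)P_x = d\,p_x P_x$, which turns $\sum_x |\tr\Delta M_x|$ into $d\sum_x p_x|\tr\Delta P_x| = d\,\EE|S|$ and the completeness relation into $\EE P_X = \frac{1}{d}\1$. The extra details you supply (that $\{p_x\}$ is a genuine probability distribution and that $S$ is real by Hermiticity) are left implicit in the paper but are exactly the right points to make explicit.
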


Beyond these $t$-design tensor products, we are going to consider the class
of all POVMs implementable by a protocol of local operations and
classical communication (LOCC) which includes the above;
the class SEP consisting of all POVMs $M=(M_x)_{x\in\cX}$ with fully
separable operators $M_x \geq 0$,
which in turn contains LOCC; and finally the even larger class PPT
that is defined by $M_x^{\Gamma_I} \geq 0$ for all $x\in\cX$ and
$I\subset[K]$, where $\Gamma_I$ is the partial transpose on all
parties $I$.
By the definition of these classes, it is enough to consider two-outcome
POVMs $(M,\1-M)$ that can be implemented by LOCC, or such that both
$M$ and $\1-M$ are separable, or PPT with respect to all bipartite
cuts, respectively.
See~\cite{MWW} for a more detailed discussion of these classes.

\begin{definition}
  \label{def:k-party-2-norm}
  For any operator $\Delta$ (we only consider Hermitian ones in the
  following) on $\cH = \cH_1\otimes \cdots \otimes \cH_K$, let
  \[
    \| \Delta \|_{2(K)} := \sqrt{\sum_{I\subset[K]} \tr|\tr_I\Delta|^2},
  \]
  where $\tr_I$ denotes the partial trace over all parties $I$.

  Note that for $K=1$, this is ``almost'' the non-commutative $\ell_2$-norm:
  $\|\Delta\|_{2(1)} = \sqrt{|\tr\Delta|^2+\tr|\Delta|^2}$, reducing
  to the latter (aka Hilbert-Schmidt norm), $\|\Delta\|_2 = \sqrt{\tr\Delta\Delta^\dagger}$,
  on traceless operators.
\end{definition}

In \cite{MWW} such measurements and the above classes LOCC, SEP and PPT
were investigated in the case of $K=1$ and $K=2$ parties. Measurement norm
and $2$-norm were first directly related in \cite{AmbainisEmerson}, with
an application in quantum algorithms, while
Harrow \emph{et al.}~\cite{HarrowMontanaroShort}
were the first to realise that for a $4$-design POVM $M$,
the measurement norm and $\ell_2$-norm are indeed equivalent,
although only for traceless operators $\Delta$ on a single system:
\[
  \|\Delta\|_2 \geq \|\Delta\|_M \geq \frac{1}{3}\|\Delta\|_2,
\]
The extension to two parties in~\cite{MWW},
\[
  \|\Delta\|_M \geq \frac{1}{\sqrt{153}}\|\Delta\|_2,
\]
for a tensor product of two $4$-design POVMs and still assuming
$\tr\Delta = 0$, subsequently found applications in entanglement
theory~\cite{BrandaoChristandlYard:sq}, suggesting that our results
for larger $K$ might be useful, too.

\section{Comparison with $2$-norms}
\label{sec:2-norms}
Our first two theorems show that the norms related to $2$- and $4$-designs
are closely related to the norm $\|\cdot\|_{2(K)}$.

\begin{theorem}
  \label{thm:2-designs-times-k}
  If $M$ is a tensor product of $K$ $2$-design POVMs, then
  \[
    \| \Delta \|_{M} \leq \sqrt{\prod_{j=1}^K \frac{d_j}{d_j+1}} \| \Delta \|_{2(K)}
                             \leq \| \Delta \|_{2(K)}.
  \]
\end{theorem}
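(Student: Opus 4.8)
The plan is to exploit the probabilistic reformulation of Observation~\ref{obs:POVM-bias}, pushed through the tensor product structure, and then control the first moment by the second. Since $M=M^{(1)}\otimes\cdots\otimes M^{(K)}$ is itself a rank-one POVM on $\cH$ (of dimension $D=d_1\cdots d_K$), the observation applies directly and gives $\|\Delta\|_M=D\,\EE|S|$ with $S=\tr(\Delta P_X)$. The point is that the product structure $p_x=\prod_j p^{(j)}_{x_j}$ makes the component indices $X_1,\dots,X_K$ independent, with $P_X=P^{(1)}_{X_1}\otimes\cdots\otimes P^{(K)}_{X_K}$. First I would apply the Cauchy--Schwarz (equivalently Jensen) inequality $\EE|S|\leq\sqrt{\EE S^2}$, which reduces the whole problem to computing the second moment of $S$.

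For the second moment I would write $S^2=\tr\bigl[(\Delta\otimes\Delta)(P_X\otimes P_X)\bigr]$ on two copies of $\cH$, rearranged as $\bigotimes_{j=1}^K(\cH_j\otimes\cH_j)$. Because the $X_j$ are independent, the expectation factorizes across parties, and for each party the defining property of a $2$-design (the $t=2$ case of the identity in Definition~\ref{def:t-design-POVM}, with $\sum_{\pi\in\mathfrak{S}_2}U_\pi=\1_j+F_j$) yields
\[
  \EE\bigl(P^{(j)}_{X_j}\otimes P^{(j)}_{X_j}\bigr)=\frac{1}{d_j(d_j+1)}\bigl(\1_j+F_j\bigr),
\]
where $F_j$ is the swap on $\cH_j\otimes\cH_j$. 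Expanding the product over parties turns $\bigotimes_j(\1_j+F_j)$ into a sum over subsets, $\sum_{I\subset[K]}F_I$, where $F_I$ swaps the two copies exactly on the parties in $I$ and acts as the identity elsewhere, so that
\[
  \EE S^2=\prod_{j=1}^K\frac{1}{d_j(d_j+1)}\sum_{I\subset[K]}\tr\bigl[(\Delta\otimes\Delta)F_I\bigr].
\]

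The crux is then the partial-swap trace identity $\tr\bigl[(\Delta\otimes\Delta)F_I\bigr]=\tr|\tr_{I^c}\Delta|^2$, which is where the careful index bookkeeping lives and which I expect to be the main (though routine) obstacle. Writing $\cH=\cH_I\otimes\cH_{I^c}$ and expanding in components, the unswapped $I^c$-legs of the two copies close up into diagonal (partial-trace) sums, while the swapped $I$-legs cross-connect the two copies; collapsing these contractions produces $\tr\bigl((\tr_{I^c}\Delta)^2\bigr)$, equal to $\tr|\tr_{I^c}\Delta|^2$ by Hermiticity of the partial trace. Summing over all $I$ and relabelling $I\leftrightarrow I^c$ recovers exactly $\sum_{I\subset[K]}\tr|\tr_I\Delta|^2=\|\Delta\|_{2(K)}^2$, whence $\EE S^2=\prod_j\frac{1}{d_j(d_j+1)}\|\Delta\|_{2(K)}^2$.

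Finally I would substitute back and simplify the prefactor,
\[
  D\sqrt{\prod_{j=1}^K\frac{1}{d_j(d_j+1)}}=\prod_{j=1}^K\sqrt{\frac{d_j}{d_j+1}}=\sqrt{\prod_{j=1}^K\frac{d_j}{d_j+1}},
\]
which gives $\|\Delta\|_M=D\,\EE|S|\leq\sqrt{\prod_{j}\frac{d_j}{d_j+1}}\,\|\Delta\|_{2(K)}$; the second inequality in the statement is then immediate since each factor $\frac{d_j}{d_j+1}\leq1$.
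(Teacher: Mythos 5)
Your proof is correct and follows essentially the same route as the paper's: Observation~\ref{obs:POVM-bias} applied to the product POVM, the bound $\EE|S|\leq\sqrt{\EE S^2}$, the per-party $2$-design identity giving $\frac{1}{d_j(d_j+1)}(\1_j+F_j)$, and the swap-trace identity $\tr\bigl[(A\otimes B)F\bigr]=\tr AB$ expanded over subsets to recover $\|\Delta\|_{2(K)}^2$. The only difference is that you spell out the independence of the $X_j$ and the subset relabelling $I\leftrightarrow I^c$ more explicitly than the paper does.
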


\begin{proof}
  Starting from observation~\ref{obs:POVM-bias}, with the random
  variable $S$ that takes the value $\tr\Delta P_{\underline{x}}$ with probability
  $p_{\underline{x}} = p_{x_1}\cdots p_{x_K}$,
  we have by the convexity of the square function,
  \begin{equation*}
    \|\Delta\|_M = D\EE|S| \leq D\sqrt{\EE S^2}.
  \end{equation*}
  Furthermore, using the definition of $2$-design,
  \[\begin{split}
    \EE S^2 &= \sum_{\underline{x}} p_{\underline{x}} \left( \tr\Delta P_{\underline{x}} \right)^2 \\
      &= \sum_{\underline{x}} p_{\underline{x}} \tr(\Delta\ox\Delta)(P_{\underline{x}}\ox P_{\underline{x}}) \\
      &= \tr\left( \Delta^{\ox 2} \bigotimes_{j=1}^K \frac{\1+F}{d_j(d_j+1)} \right) \\
      &= \prod_{j=1}^K \frac{1}{d_j(d_j+1)} \sum_{I\subset[K]} \tr\left(\tr_I \Delta\right)^2,
  \end{split}\]
  with the notation $F:=U_{(12)}$, and using the fact that $\tr (A\ox B)F = \tr AB$.
  Inserting this into the above inequality concludes the proof.
\end{proof}

\begin{theorem}
  \label{thm:4-designs-times-k}
  If $M$ is a tensor product of $K$ $4$-design POVMs, then
  \[
    \sqrt{\frac{1}{18}}^{K} \| \Delta \|_{2(K)} \leq \| \Delta \|_{M} \leq \| \Delta \|_{2(K)}.
  \]
\end{theorem}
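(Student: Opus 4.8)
The upper bound needs no new computation: every projective $4$-design is in particular a $2$-design (partial-tracing $t-t'$ of the tensor factors in Definition~\ref{def:t-design-POVM} turns a $t$-design into a $t'$-design), so $M$ is also a tensor product of $K$ $2$-designs and Theorem~\ref{thm:2-designs-times-k} gives $\|\Delta\|_M\le\|\Delta\|_{2(K)}$ directly. All the work is therefore in the lower bound, and my plan is to obtain it by the moment method. Starting from Observation~\ref{obs:POVM-bias} I would write $\|\Delta\|_M=D\,\EE|S|$ for the random variable $S=\tr\Delta P_{\underline{x}}$, distributed according to $p_{\underline{x}}=p_{x_1}\cdots p_{x_K}$, and then bound the first absolute moment from below in terms of the second and fourth moments.

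The elementary ingredient is the interpolation inequality $\EE S^2\le(\EE|S|)^{2/3}(\EE S^4)^{1/3}$, which is just H\"older applied to $S^2=|S|^{2/3}\cdot|S|^{4/3}$; rearranged it reads $\EE|S|\ge(\EE S^2)^{3/2}/(\EE S^4)^{1/2}$. The second moment is already in hand from the proof of Theorem~\ref{thm:2-designs-times-k}, namely $\EE S^2=\prod_{j=1}^K\frac{1}{d_j(d_j+1)}\,\|\Delta\|_{2(K)}^2$, so everything reduces to an upper bound on $\EE S^4$.

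Computing the fourth moment is where the $4$-design hypothesis enters and where the real difficulty lies. Applying the design property party by party gives
\[
  \EE S^4 = \prod_{j=1}^K\frac{1}{d_j(d_j+1)(d_j+2)(d_j+3)}\ \tr\!\left(\Delta^{\ox 4}\bigotimes_{j=1}^K\sum_{\pi\in\mathfrak{S}_4}U_\pi\right).
\]
Expanding the tensor product of the $K$ permutation sums produces a sum over $K$-tuples $(\pi_1,\dots,\pi_K)\in\mathfrak{S}_4^K$, each term being a contraction of four copies of $\Delta$ whose pattern is dictated by the joint cycle structure of the $\pi_j$; these contractions are built precisely from the quantities $\tr|\tr_I\Delta|^2$ (together with lower mixed traces) that assemble into $\|\Delta\|_{2(K)}^2$. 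The goal is the estimate
\[
  \EE S^4 \le 18^K\ \prod_{j=1}^K\frac{1}{d_j(d_j+1)^3}\ \|\Delta\|_{2(K)}^4.
\]
I expect the main obstacle to be handling all $24^K$ terms at once: the plan is to organise them by joint cycle type, dominate each resulting Hermitian trace by a product of $\tr|\tr_I\Delta|^2$ factors using Cauchy--Schwarz, and collect the combinatorial constants into the factor $18^K$. This bookkeeping is the technical core of the argument, and is what I would place in Appendix~\ref{app:4-design-proofs}, building on the single-system estimates of~\cite{AmbainisEmerson,HarrowMontanaroShort,MWW}.

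Granting the fourth-moment bound, the assembly is immediate:
\[
  \|\Delta\|_M = D\,\EE|S|\ \ge\ D\,\frac{(\EE S^2)^{3/2}}{(\EE S^4)^{1/2}}\ \ge\ \sqrt{\tfrac{1}{18}}^{\,K}\,\|\Delta\|_{2(K)},
\]
where in the last step the dimensional prefactors cancel exactly: the factor $D=\prod_j d_j$, the factor $\prod_j(d_j(d_j+1))^{-1}$ from $\EE S^2$ raised to the power $3/2$, and the factor $\prod_j(d_j(d_j+1)^3)^{-1}$ from the bound on $\EE S^4$ raised to the power $1/2$ combine to $1$, so that only the constant $\sqrt{1/18}^{\,K}$ survives, as claimed.
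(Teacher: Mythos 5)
Your reduction is exactly the paper's: the upper bound via theorem~\ref{thm:2-designs-times-k}, the lower bound via the Berger/H\"older inequality $\EE|S|\geq(\EE S^2)^{3/2}/(\EE S^4)^{1/2}$, the second moment recycled from the $2$-design computation, and the fourth moment reduced to a trace inequality over $\mathfrak{S}_4^K$. Your bookkeeping of the dimensional prefactors is also fine; in fact your target bound on $\EE S^4$ is slightly weaker than, hence implied by, the one the paper proves, since $(d_j+1)^2\leq(d_j+2)(d_j+3)$. The problem is that you have not proved the one statement in which essentially all of the difficulty of the theorem resides, namely
\[
  \tr\left( \Delta^{\ox 4}\left( \sum_{\underline{\pi}\in\mathfrak{S}_4^K} U_{\underline{\pi}} \right)\right)
  \ \leq\ 18^K \left[ \sum_{I\subset[K]} \tr\left(\tr_I \Delta\right)^2 \right]^2,
\]
which is proposition~\ref{prop:4th-vs-2nd-moment} and occupies the paper's entire appendix~\ref{app:4-design-proofs}. ``Granting the fourth-moment bound'' is granting the theorem.

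Moreover, the plan you sketch for that step --- dominate each of the $24^K$ terms via Cauchy--Schwarz and ``collect the combinatorial constants into the factor $18^K$'' --- does not, as described, produce $18^K$. Termwise Cauchy--Schwarz splitting followed by bounding every term by the worst case is precisely the paper's ``alternative proof of a weaker version'', and it yields $24^K$, i.e.\ only the constant $\sqrt{1/24}^{\,K}$ in the theorem. To get $18^K$ the paper needs three further ingredients absent from your sketch: (i) it keeps track of exactly which pair $(\sigma^L,\sigma^R)\in\mathfrak{A}\times\mathfrak{A}$ each $\sigma\in\mathfrak{S}_4$ splits into (the table of Fig.~\ref{fig:split-table}); (ii) for an arbitrary tuple in $\mathfrak{A}^K$ it proves the estimate of eq.~(\ref{bound}), and this is not a Cauchy--Schwarz step but a positivity argument --- one builds operators $R=\widetilde{P}\widetilde{P}^{\dagger}$ and $S=\widetilde{Q}\widetilde{Q}^{\dagger}$ out of partial transposes of partial traces of $\Delta$ and maximally entangled vectors, and applies $\tr VW\leq(\tr V)(\tr W)$ for positive semidefinite $V,W$; and (iii) it counts that the tuples whose left split lands on a given subset $I\subset[K]$ number only $6^{|I|}\cdot 18^{K-|I|}\leq 18^K$, because exactly $6$ of the $24$ permutations have $\sigma^L\in\{\id,(14)\}$. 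As it stands, your proposal establishes at best the $24^K$ version, not the stated constant.
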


\begin{proof}
  Again we start with observation~\ref{obs:POVM-bias}, with the random
  variable $S$ that takes the value $\tr\Delta P_{\underline{x}}$ with probability
  $p_{\underline{x}} = p_{x_1}\cdots p_{x_K}$.

  The upper bound is contained in theorem~\ref{thm:2-designs-times-k}, as a
  $t$-design is automatically a $(t-1)$-design.
  For the lower bound, we
  follow the strategy of Ambainis and Emerson~\cite{AmbainisEmerson},
  using this inequality of Berger's~\cite{Berger:inequality} (by the
  way a special case of H\"older's inequality):
  \[
    \EE|S| \geq \sqrt{\frac{(\EE S^2)^3}{\EE S^4}}.
  \]
  In the proof of theorem~\ref{thm:2-designs-times-k} we have already
  calculated
  \[
    \EE S^2 = \prod_{j=1}^K \frac{1}{d_j(d_j+1)} \sum_{I\subset[K]} \tr\left(\tr_I \Delta\right)^2.
  \]
  Using the property of $4$-design, we similarly get
  \[
    \EE S^4 = \prod_{j=1}^K \frac{1}{d_j(d_j+1)(d_j+2)(d_j+3)}
              \tr\left( \Delta^{\ox 4}
                        \left( \sum_{\underline{\pi}\in\mathfrak{S}_4^K} U_{\underline{\pi}} \right)
                  \right),
  \]
  with the notation $\displaystyle{U_{\underline{\pi}}:=\bigotimes_{j=1}^KU_{\pi_j}}$
  for $\underline{\pi}=(\pi_1,\ldots,\pi_K)$.

  Thus it suffices to show
  \begin{equation*}
    \tr\left( \Delta^{\ox 4}
              \left( \sum_{\underline{\pi}\in\mathfrak{S}_4^K} U_{\underline{\pi}} \right)
        \right)              \leq 18^K \left[ \sum_{I\subset[K]} \tr\left(\tr_I \Delta\right)^2 \right]^2,
  \end{equation*}
  which is precisely proposition~\ref{prop:4th-vs-2nd-moment} in
  appendix~\ref{app:4-design-proofs}, and we are done.
\end{proof}

\bigskip\noindent
\emph{Alternative proof of a weaker version of theorem~\ref{thm:4-designs-times-k}.}
  Here is a way of demonstrating the slightly worse bound
  \begin{equation*}
    \tr\left( \Delta^{\ox 4}
              \left( \sum_{\underline{\pi}\in\mathfrak{S}_4^K} U_{\underline{\pi}} \right)
        \right)            \leq 24^K \left[ \sum_{I\subset[K]} \tr\left(\tr_I \Delta\right)^2 \right]^2,
  \end{equation*}
  which has the advantage of being conceptually simple, and showing some
  of the tricks used in the proof of proposition~\ref{prop:4th-vs-2nd-moment}.
  For this it is enough to show that, for every $K$-tuple $\underline{\pi}\in \mathfrak{S}_4^K$:
  \begin{equation}
    \label{eq:simpler-task}
    t({\underline{\pi}}) :=   \left| \tr \Delta^{\ox 4} U_{\underline{\pi}} \right|
                        \leq \max_{I\subset[K]} \left[ \tr\left(\tr_I \Delta\right)^2 \right]^2.
  \end{equation}

  The basic idea is to use Cauchy-Schwarz inequality
  as in~\cite{MWW}, but now repeatedly:
  For arbitrary (compatible) operators $X$ and $Y$,
  \[
    \big|\tr XY^\dagger\big| \leq \sqrt{(\tr XX^\dagger)(\tr YY^\dagger)}.
  \]

  Concretely, given Hermitian operators $M_1$, $M_2$, $M_3$, $M_4$ on a Hilbert
  space $\mathcal{K}$ and a permutation $\sigma\in\mathfrak{S}_4$ with corresponding
  unitary $U_{\sigma}$ on $\mathcal{K}^{\otimes 4}$, we may write
  \[
    \tr U_{\sigma}(M_1\otimes M_2\otimes M_3\otimes M_4) = \tr XY^{\dagger},
  \]
  with operators $X$ and $Y$ mapping $\mathcal{K}^{\otimes k}$ to
  $\mathcal{K}^{\otimes \ell}$, where $k$ and $\ell$ depend on the permutation $\sigma$,
  and may well be $0$.
  To be precise, taking the $M_a$ as matrices, the left hand trace above is a
  contraction of the eight (four upper and four lower) indices of
  $M_1\otimes M_2\otimes M_3\otimes M_4$, where $\sigma$ tell us which
  pairs are to be contracted, namely the upper index of $M_a$ with the lower
  index of $M_{\sigma(a)}$. Now, $X$ is the tensor contraction of the part
  of this network involving $M_1$ and $M_2$, and $Y^\dagger$ is the contraction
  of the remaining part, involving $M_3$ and $M_4$;
  note that $X$ and $Y$ may be numbers, but
  usually are matrices because of the ``dangling'' indices connecting them.
  With this,
  \begin{align*}
    \tr XX^{\dagger} &= \tr U_{\sigma^L}(M_1\otimes M_2\otimes M_2\otimes M_1), \\
    \tr YY^{\dagger} &= \tr U_{\sigma^R}(M_4\otimes M_3\otimes M_3\otimes M_4).
  \end{align*}
  with certain permutations $\sigma^L,\sigma^R\in\mathfrak{S}_4$.

  What we gain by doing so is that $\sigma^L$ and $\sigma^R$ are not arbitrary
  elements of $\mathfrak{S}_4$; rather, they necessarily belong to the subset
  $\mathfrak{A}:=\{\id,(14),(23),(1234),(1432),(12)(34),(14)(23)\}$,
  which is stable under the exchange $1\leftrightarrow 4$ and $2\leftrightarrow 3$,
  i.e. under the conjugation by $(14)(23)$.
  In order to easily see into which pair $(\sigma^L,\sigma^R)\in\mathfrak{A}\times\mathfrak{A}$
  each permutation $\sigma\in\mathfrak{S}_4$ splits, we can make use of Penrose's
  tensor diagrams~\cite{Penrose}, which we briefly explain here.

  For any Hermitian $M$ on $\mathcal{K}$ and unit vectors
  $|i\rangle$, $|j\rangle\in\mathcal{K}$ we represent the matrix element
  $\langle i|M|j\rangle$ by the following diagram with terminals:
  \[
    \includegraphics[width=2cm]{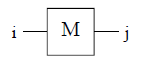}
  \]
  Summing matrix elements over an orthonormal basis of $\mathcal{K}$
  is represented by joining the corresponding terminals. So, for instance,
  $\tr M=\displaystyle{\sum_j \bra{j} M \ket{j}}$ is represented by
  \[
    \includegraphics[width=2cm]{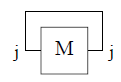}
  \]
  And in the same way for matrix multiplication,
  $\displaystyle{\bra{i}MN\ket{k}=\sum_j \bra{i}M\proj{j}N\ket{k}}$ is
  represented by
  \[
    \includegraphics[width=3cm]{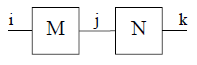}
  \]
  The expressions we looked at above are, for Hermitian $M_1$, $M_2$, $M_3$, $M_4$
  on $\mathcal{K}$ and $\sigma\in\mathfrak{S}_4$:
  \[
    \tr U_{\sigma}(M_1\otimes M_2\otimes M_3\otimes M_4)
       =\sum_{i_1,i_2,i_3,i_4} \langle i_1|M_1|i_{\sigma(1)}\rangle
                               \langle i_2|M_2|i_{\sigma(2)}\rangle
                               \langle i_3|M_3|i_{\sigma(3)}\rangle
                               \langle i_4|M_4|i_{\sigma(4)}\rangle.
  \]
  For instance, $\tr U_{(123)}(M_1\otimes M_2\otimes M_3\otimes M_4)$ is represented by
  \[
    \includegraphics[width=5.5cm]{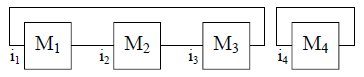}
  \]
  and $X = M_1 M_2$, $Y^\dagger=M_3 (\tr M_4)$.

  In this case, the splitting procedure and use of Cauchy-Schwarz described above
  can be diagrammatically written as
  \[
    \vrule\ \includegraphics[width=5cm]{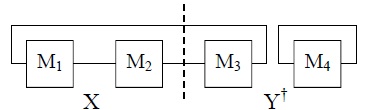} \vrule\
        \stackrel{\leq}{\phantom{=}}
     \sqrt{\includegraphics[width=10cm]{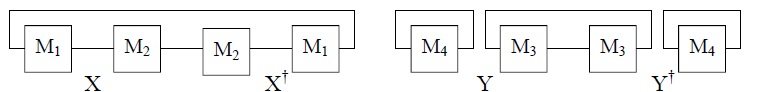}}
  \]
  which means that for $\sigma=(123)$, we have $\sigma^L=(1234)$ and $\sigma^R=(23)$.

  The resulting splitting map
  $\text{Split}:\mathfrak{S}_4\ni\sigma\mapsto(\sigma^L,\sigma^R)\in\mathfrak{A}\times\mathfrak{A}$
  for each $\sigma\in\mathfrak{S}_4$ can then easily be constructed and looked
  up in the table of Fig.~\ref{fig:split-table}.

  \begin{figure}[ht]
  \begin{tabular}{|l||c||c|c|}\hline
   Conj.~class & $\sigma$    & $\sigma^L$ & $\sigma^R$ \\ \hline\hline
   1111        & $\id$    & $\id$      & $\id$      \\ \hline\hline
   211         & (12)     & (12)(34)   & $\id$      \\ \hline
               & (13)     & (14)       & (23)       \\ \hline
               & (14)     & (14)       & (14)       \\ \hline
               & (23)     & (23)       & (23)       \\ \hline
               & (24)     & (23)       & (14)       \\ \hline
               & (34)     & $\id$      & (12)(34)       \\ \hline\hline
   22          & ~(12)(34)~ & ~(12)(34)~   & ~(12)(34)~   \\ \hline
               & (13)(24) & (14)(23)   & (14)(23)   \\ \hline
               & (14)(23) & (14)(23)   & (14)(23)   \\ \hline\hline
   31          & (123)    & (1234)     & (23)       \\ \hline
               & (132)    & (1432)     & (23)       \\ \hline
               & (124)    & (1234)     & (14)       \\ \hline
               & (142)    & (1432)     & (14)       \\ \hline
               & (134)    & (14)       & (1234)     \\ \hline
               & (143)    & (14)       & (1432)     \\ \hline
               & (234)    & (23)       & (1234)     \\ \hline
               & (243)    & (23)       & (1432)     \\ \hline\hline
   4           & (1234)   & (1234)     & (1234)     \\ \hline
               & (1243)   & (1234)     & (1432)     \\ \hline
               & (1324)   & (14)(23)   & (14)(23)   \\ \hline
               & (1342)   & (1432)     & (1234)     \\ \hline
               & (1432)   & (1432)     & (1432)     \\ \hline
               & (1423)   & (14)(23)   & (14)(23)   \\ \hline
  \end{tabular}
  \caption{Table of the splitting map
     $\text{Split}:\mathfrak{S}_4 \longrightarrow \mathfrak{A} \times \mathfrak{A}$,
     $\text{Split}(\sigma) = (\sigma^L,\sigma^R)$, grouped according to conjugacy classes of $\sigma$.}
  \label{fig:split-table}
  \end{figure}

  Trivially extending the above reasoning to $K$-tuples
  $\underline{\pi}=(\pi_1,\ldots,\pi_K) \in\mathfrak{S}_4^K$ of permutations, we
  apply the splitting map to all the $\pi_i$ ($1\leq i\leq K$), and use the
  Cauchy-Schwarz  as well as geometric-arithmetic mean inequality:
  \begin{equation}\begin{split}
    \label{eq:CS}
     t(\underline{\pi})
         =     \left|\tr \Delta^{\ox 4}U_{\underline{\pi}} \right|
         \leq \sqrt{\left( \tr \Delta^{\ox 4}U_{\underline{\pi}^L} \right)
                    \left( \tr \Delta^{\ox 4}U_{\underline{\pi}^R} \right)}
        &=    \sqrt{t({\underline{\pi}}^L) t({\underline{\pi}}^R)} \\
        &\leq \frac12 t({\underline{\pi}}^L) + \frac12 t({\underline{\pi}}^R).
  \end{split}\end{equation}

  The other observation we use is that $t(\underline{\pi})$ is invariant
  under conjugation by elements from the diagonal subgroup
  $\mathfrak{G}:=\{(\sigma,\ldots,\sigma),\ \sigma\in\mathfrak{S}_4\}$
  of $\mathfrak{S}_4^K$, because $\Delta^{\otimes4}$ is invariant under conjugation
  by elements of the form $(U_{\sigma})^{\otimes K}$.
  Now, notice that the subset $\mathfrak{A}_0 = \{\id,(12)(34),(14)(23)\}$ of $\mathfrak{A}$
  is such that $\mathfrak{A}_0^K$ is stable under conjugation by any element of $\mathfrak{G}$
  followed by splitting. And what is more, any given $\underline{\pi}\in\mathfrak{S}_4^K$
  can be transformed into a family of elements of $\mathfrak{A}_0$ by repeatedly conjugating
  by elements of $\mathfrak{G}$ and splitting.

  Thus, using eq.~(\ref{eq:CS}) and conjugation invariance repeatedly, we eventually
  get for all $\underline{\pi}\in\mathfrak{S}_4^K$ the upper bound
  \[
    t({\underline{\pi}}) \leq \sum_\alpha p_\alpha t(\underline{\pi}^{(\alpha)}),
  \]
  with certain $p_\alpha\geq 0$ summing to $1$, and
  $\underline{\pi}^{(\alpha)}$ belonging to $\mathfrak{A}_0^K$. As we have
  not attempted to control the coefficients $p_\alpha$, we record as a useful
  intermediate bound for all $\underline{\pi}\in\mathfrak{S}_4^K$,
  \begin{equation}
    \label{eq:bound-t}
    t({\underline{\pi}}) \leq \max_{\underline{\sigma}\in\cA_0^K} t(\underline{\sigma}).
  \end{equation}
  As a matter of fact, we know already how to upper bound the traces on the right hand side
  of eq.~(\ref{eq:bound-t}). Indeed, a generic $\underline{\sigma}\in\cA_0^K$ is given by
  disjoint subsets $I,J\subset [K]$, such that:
  \[
    {\sigma}_j = \begin{cases}
                   \id      & j \in I, \\
                   (12)(34) & j \in J, \\
                   (14)(23) & j \in J':=[K]\setminus (I\cup J).
                 \end{cases}
  \]
  Hence,
  \[
    t(\underline{\sigma}) = \tr \Delta^{\ox 4} U_{\underline{\sigma}}
                          = \tr\left( \bigl(\tr_I\Delta\bigr)^{\ox 4}
                                      \bigl(U_{(12)(34)}^{\ox J} \ox U_{(14)(23)}^{\ox J'} \bigr) \right),
  \]
  which really is a bipartite term (i.e., $K=2$) as treated
  in~\cite[Proof of Lemma 26, case ``(2,2):(2,2)'']{MWW}:
  there it was shown to be $\leq \left[ \tr(\tr_I\Delta)^2 \right]^2$.

  Combining this bound with eq.~(\ref{eq:bound-t}), we obtain eq.~(\ref{eq:simpler-task}),
  and we are done.
\qed

\bigskip
Theorem~\ref{thm:4-designs-times-k} extends the results of~\cite{AmbainisEmerson,MWW}
to $K>2$; however we may wonder how good the lower bound really is, and in particular
if the dependence on $K$ is ``real''. The following result shows that, indeed, the constant
relating $\|\Delta\|_M$ has to decrease as a power of $K$. For this it is enough to
analyse a specific tensor product of local $4$-design POVMs, and we choose
$U_{\cH}:=U_{\cH_1}\otimes\cdots\otimes U_{\cH_K}$, the tensor product of
the $K$ uniform (unitary invariant) POVMs on sub-systems $\cH_j$ ($j=1,\ldots,K$).
This is an interesting measurement since each of the $U_{\cH_j}$ is an $\infty$-design,
in particular a $4$-design, and we can exploit the symmetry to make calculations
feasible.
Whereas theorem~\ref{thm:4-designs-times-k} gives us
\[
   \sqrt{\frac{1}{18}}^K \| \Delta \|_{2}
        \leq \sqrt{\frac{1}{18}}^K \| \Delta \|_{2(K)}
        \leq \| \Delta \|_{U_{\cH}}
        \leq \| \Delta \|_{2(K)},
\]
we have the following:

\begin{proposition}
  \label{prop:u-tight}
  There exists a Hermitian $\Delta\neq 0$ on $\cH$ such that
  \[
    \|\Delta\|_{U_{\cH}} = \sqrt{\frac12}^K \|\Delta\|_{2(K)} = \sqrt{\frac12}^K \|\Delta\|_{2}.
  \]
\end{proposition}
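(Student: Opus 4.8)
The plan is to exhibit $\Delta$ as a tensor product and exploit the fact that for the uniform POVM $U_{\cH}=U_{\cH_1}\otimes\cdots\otimes U_{\cH_K}$ the associated norm factorizes across the parties. Concretely, I would take $\Delta=\Delta_1\otimes\cdots\otimes\Delta_K$ with each $\Delta_j$ a \emph{traceless} Hermitian operator on $\cH_j$. Two simplifications then follow at once. First, since $\tr\Delta_j=0$, every partial trace $\tr_I\Delta$ with $I\neq\emptyset$ vanishes, so that $\|\Delta\|_{2(K)}^2=\tr\Delta^2=\prod_j\tr\Delta_j^2$; equivalently $\|\Delta\|_{2(K)}=\|\Delta\|_2=\prod_j\|\Delta_j\|_2$, which already delivers the second equality in the statement. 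Second, the uniform POVM on $\cH$ is a product measure: its random outcome is a product state $\psi_1\otimes\cdots\otimes\psi_K$ with the $\psi_j$ independent and Haar-distributed on the respective factors, so that $P_{\underline x}=\bigotimes_j P_{x_j}$ with independent Haar rank-one projectors $P_{x_j}$. Hence, by observation~\ref{obs:POVM-bias}, $\|\Delta\|_{U_{\cH}}=D\,\EE|S|$ with $S=\prod_{j=1}^K\tr(\Delta_j P_{x_j})$ a product of \emph{independent} random variables, whence $\EE|S|=\prod_j\EE|\tr(\Delta_j P_{x_j})|$ and the measurement norm factorizes:
\[
  \|\Delta\|_{U_{\cH}}=\prod_{j=1}^K\bigl(d_j\,\EE|\tr(\Delta_j P_{x_j})|\bigr)=\prod_{j=1}^K\|\Delta_j\|_{U_{\cH_j}}.
\]

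With this reduction in hand, it suffices to find, in a single factor, a traceless Hermitian $\Delta_j$ that attains the one-party ratio $\|\Delta_j\|_{U_{\cH_j}}=\sqrt{\frac12}\,\|\Delta_j\|_2$; taking the $K$-fold product then yields $\|\Delta\|_{U_{\cH}}=\prod_j\sqrt{\frac12}\,\|\Delta_j\|_2=\sqrt{\frac12}^K\|\Delta\|_2$, as required. The cleanest choice is a qubit, $\cH_j=\CC^2$, with $\Delta_j=Z=\ketbra{0}{0}-\ketbra{1}{1}$, so that $\|\Delta_j\|_2=\sqrt{\tr Z^2}=\sqrt2$. To compute the measurement norm I would use the Bloch parametrization $P=\frac12(\1+\vec n\cdot\vec\sigma)$, where $\vec n$ is uniform on the sphere: then $\tr(ZP)=n_z$, and by Archimedes' theorem $n_z$ is uniform on $[-1,1]$, so $\EE|\tr(ZP)|=\EE|n_z|=\frac12$. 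Consequently $\|\Delta_j\|_{U_{\cH_j}}=d_j\,\EE|\tr(ZP)|=2\cdot\frac12=1$, and the ratio $\|\Delta_j\|_{U_{\cH_j}}/\|\Delta_j\|_2=1/\sqrt2=\sqrt{\frac12}$ comes out exactly. Thus $\Delta=Z^{\otimes K}$ on $K$ qubits does the job.

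The individual computations here are routine once the factorization is in place; the only genuine content is the single-qubit Haar integral, which is elementary via Archimedes' theorem and, crucially, hits the constant $\sqrt{\frac12}$ \emph{exactly}. The one point I would check with care is the factorization of $\|\cdot\|_{U_{\cH}}$: that $S$ splits as a product of independent factors rests on the product structure of the uniform POVM together with the tensor form of $\Delta$, and it is precisely this product structure that makes the constant appear as a $K$-th power $\sqrt{\frac12}^K$. I do not expect a real obstacle, since the statement asks only for one explicit witness; its purpose is to certify that the lower-bound constant in theorem~\ref{thm:4-designs-times-k} must decay at least geometrically in $K$, so that the dependence on the number of parties is genuine.
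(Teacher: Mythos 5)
Your construction is, up to rescaling each factor by $\frac12$, exactly the paper's: the paper takes $\Delta_j=\frac12\proj{0}-\frac12\proj{1}$ on $\cH_j$, forms $\Delta=\bigotimes_{j=1}^K\Delta_j$, uses the same two observations you do (all nontrivial partial traces vanish, so $\|\Delta\|_{2(K)}=\|\Delta\|_2$; and both the state and the measurement factorize, so $\|\Delta\|_{U_{\cH}}=\prod_j\|\Delta_j\|_{U_{\cH_j}}$), and then reduces everything to the single-party value of $\|\Delta_j\|_{U_{\cH_j}}$, which it quotes from \cite[Theorem 10]{MWW}. Your factorization argument and your qubit computation via Archimedes' theorem are correct.

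The one genuine issue is the scope of your single-party ingredient. The proposition is stated for the space $\cH=\cH_1\otimes\cdots\otimes\cH_K$ fixed in section~\ref{sec:t-designs}, with arbitrary local dimensions $d_j$; you are free to choose $\Delta$, but not to declare $\cH_j=\CC^2$. For $d_j>2$ you must embed $Z_j=\proj{0}-\proj{1}$ into $\cH_j$, and then the random projector of observation~\ref{obs:POVM-bias} is Haar-distributed over the \emph{full} $d_j$-dimensional space, so $\tr(Z_jP)$ is no longer uniform on $[-1,1]$ and the Bloch-sphere/Archimedes step fails. What you need is that $\|\proj{0}-\proj{1}\|_{U_{\cH_j}}=1$ in \emph{every} dimension $d_j\geq 2$ --- which is precisely the content of \cite[Theorem 10]{MWW}, cited by the paper at this point. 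It can also be checked directly: for Haar-random $\ket{\psi}$, the pair $x=|\braket{0}{\psi}|^2$, $y=|\braket{1}{\psi}|^2$ has the Dirichlet marginal density $(d_j-1)(d_j-2)(1-x-y)^{d_j-3}$, and a short computation gives $\EE|x-y|=1/d_j$, hence $\|Z_j\|_{U_{\cH_j}}=d_j\cdot(1/d_j)=1$, so the dimension-dependence cancels exactly. With that replacement (or the citation) your proof coincides with the paper's and covers all dimensions; as written, it establishes the proposition only when every $d_j=2$.
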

\begin{proof}
  Define $\Delta_j = \frac12\proj{0} - \frac12\proj{1}$ with orthogonal unit
  vectors $\ket{0},\ket{1} \in \cH_j$, $j=1,\ldots,K$.
  Let $\Delta = \bigotimes_{j=1}^K \Delta_j$
  Clearly, $\tr\Delta_j = 0$ and $\|\Delta_j\|_2 = \sqrt{\frac12}$ for all $j$, while
  from~\cite[Theorem 10]{MWW} we know that $\|\Delta_j\|_{U_{\cH_j}} = \frac12$.

  Hence, $\|\Delta\|_{2(K)} = \|\Delta\|_2 = 2^{-\frac{K}{2}}$; on the
  other hand, exploiting the tensor product structure of both state and
  measurement, $\|\Delta\|_{U_{\cH}} = 2^{-K}$.
\end{proof}

\bigskip
We shall now move on to investigating the properties of the measurement norms associated
with not one but a whole class of locally restricted measurements.
\begin{theorem}
  \label{thm:sep-vs-2}
  \label{thm:ppt-vs-2}
  For any number $K\geq 2$ of parties and any local dimensions $d_j$ ($1\leq j\leq K$),
  \[
    \|\Delta\|_{\mathbf{SEP}} \geq 2\sqrt{\frac12}^K \|\Delta\|_2.
  \]
  Furthermore,
  \[
    \|\Delta\|_{\mathbf{PPT}} \geq \|\Delta\|_2.
  \]
\end{theorem}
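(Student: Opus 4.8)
The plan is to prove each inequality by writing down a single explicit two-outcome POVM of the appropriate class and checking that it already attains the claimed value. Recall from the discussion preceding Definition~\ref{def:k-party-2-norm} that for both classes it suffices to consider two-outcome POVMs $(M,\1-M)$ that are PPT, resp.\ with $M$ and $\1-M$ both separable. In each case I look for $M$ of the form $M=\tfrac12(\1+Z)$ with $Z$ Hermitian, so that
\[
  \|\Delta\|_M = \bigl|\tfrac12\tr\Delta+\tfrac12\tr\Delta Z\bigr| + \bigl|\tfrac12\tr\Delta-\tfrac12\tr\Delta Z\bigr| = \max\bigl(|\tr\Delta|,\,|\tr\Delta Z|\bigr).
\]
The game is then to take $Z$ as close as possible to a positive multiple of $\Delta$, so that $|\tr\Delta Z|$ picks up $\tr\Delta^2=\|\Delta\|_2^2$, while respecting the class constraint; the two parts differ only in how large a multiple the constraint allows.

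For the PPT bound this is elementary. Set $Z=\Delta/\mu$ with $\mu:=\max_{I\subseteq[K]}\|\Delta^{\Gamma_I}\|_\infty$. Since partial transposition is an isometry for the Hilbert--Schmidt norm, $\|\Delta^{\Gamma_I}\|_\infty \le \|\Delta^{\Gamma_I}\|_2 = \|\Delta\|_2$ for every $I$, whence $\mu\le\|\Delta\|_2$. Moreover $M^{\Gamma_I}=\tfrac12(\1+\Delta^{\Gamma_I}/\mu)$ satisfies $0\le M^{\Gamma_I}\le\1$ because $\|\Delta^{\Gamma_I}/\mu\|_\infty\le1$ by definition of $\mu$, so $(M,\1-M)$ is PPT. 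Therefore
\[
  \|\Delta\|_{\mathbf{PPT}} \ge \|\Delta\|_M \ge |\tr\Delta Z| = \frac{\tr\Delta^2}{\mu} = \frac{\|\Delta\|_2^2}{\mu} \ge \|\Delta\|_2 .
\]

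For SEP the positivity of the partial transposes must be upgraded to full separability of both $M$ and $\1-M$, and this is where the work lies. Consider first a traceless $\Delta$ and take $Z=\gamma\,\Delta/\|\Delta\|_2$. Then $M=\tfrac12\1\pm\tfrac\gamma2\,\Delta/\|\Delta\|_2$ has trace $D/2$, and the associated normalised operators $\tfrac1D\1\pm\tfrac\gamma D\,\Delta/\|\Delta\|_2$ sit at Hilbert--Schmidt distance $\gamma/D$ from the maximally mixed state $\1/D$. The key ingredient is a multipartite Gurvits--Barnum-type separable ball: every state within Hilbert--Schmidt distance $2^{1-K/2}/D$ of $\1/D$ on $\cH_1\ox\cdots\ox\cH_K$ is fully separable. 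Granting this, both $M$ and $\1-M$ are separable as soon as $\gamma\le 2^{1-K/2}$, and then
\[
  \|\Delta\|_{\mathbf{SEP}} \ge \|\Delta\|_M = |\tr\Delta Z| = \gamma\,\|\Delta\|_2 = 2\sqrt{\tfrac12}^{\,K}\,\|\Delta\|_2 ,
\]
which is exactly the claimed constant; the $2^{-K/2}$ scaling is consistent with the product example of Proposition~\ref{prop:u-tight}. The general Hermitian case is then recovered by splitting off the identity component $\tfrac{\tr\Delta}{D}\1$, whose contribution is seen by the trivial separable measurement $(\1,0)$ giving $\|\Delta\|_{\mathbf{SEP}}\ge|\tr\Delta|$, and combining the two estimates; this part is secondary to the separable-ball input.

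The main obstacle is clearly that separable-ball input. Unlike partial-transpose positivity, full separability cannot be read off from a simple operator inequality, and one needs the precise radius $2^{1-K/2}/D$ in order to obtain the stated $K$-dependent constant with no dependence on the individual dimensions $d_j$. Establishing — or citing in exactly the form needed — this dimension-independent radius, and tracking that it degrades by precisely the factor $\sqrt{1/2}$ per party (which is what forces the hypothesis $K\ge2$), is the crux; everything else is the elementary computation above. By contrast the PPT half requires nothing beyond the isometry property of partial transposition together with $\|\cdot\|_\infty\le\|\cdot\|_2$, which is exactly why it comes with the clean constant $1$.
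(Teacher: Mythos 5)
Your PPT argument is correct, and it takes a genuinely different route from the paper's. The paper obtains the PPT bound from the very same Barnum--Gurvits theorem it uses for SEP, applied across each bipartition with $K=2$ (radius $1$): if $\|2M-\1\|_2\le 1$ then $M$ and $\1-M$ are separable, hence PPT, with respect to every cut. You instead verify the PPT constraints directly for the explicit choice $M=\tfrac12(\1+\Delta/\mu)$, $\mu=\max_{I\subseteq[K]}\|\Delta^{\Gamma_I}\|_\infty$, using nothing beyond the facts that partial transposition preserves the Hilbert--Schmidt norm and that $\|\cdot\|_\infty\le\|\cdot\|_2$. This buys elementarity (no separability input at all) and in fact yields the marginally stronger statement $\|\Delta\|_{\mathbf{PPT}}\ge\|\Delta\|_2^2/\mu\ge\|\Delta\|_2$. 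Your SEP argument, by contrast, is essentially the paper's: the same multipartite Barnum--Gurvits ball of radius $2^{1-K/2}$, with the self-duality of the $2$-norm instantiated by the optimal observable $Z=\gamma\Delta/\|\Delta\|_2$.

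There is, however, one genuine (if easily repaired) defect in the SEP half: the reduction from general Hermitian $\Delta$ to traceless $\Delta$. First, it is unnecessary: your own identity $\|\Delta\|_M=\max(|\tr\Delta|,|\tr\Delta Z|)\ge|\tr\Delta Z|$ holds for every Hermitian $\Delta$, and the separability of $\1\pm Z$ never used $\tr\Delta=0$, provided you invoke the ball theorem in its operator form (any Hermitian $X$ with $\|X-\1\|_2\le 2^{1-K/2}$ is separable, which is exactly how the paper cites it) rather than for trace-normalized states; so taking $Z=\gamma\Delta/\|\Delta\|_2$ for the full $\Delta$ finishes the proof in one line. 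Second, the splitting you sketch does not actually deliver the stated constant: writing $\Delta=\Delta_0+\tfrac{\tr\Delta}{D}\1$, the two estimates you propose to combine give only $\|\Delta\|_{\mathbf{SEP}}\ge\max\bigl(|\tr\Delta|,\gamma\|\Delta_0\|_2\bigr)$, which can be strictly smaller than $\gamma\|\Delta\|_2=\gamma\sqrt{\|\Delta_0\|_2^2+(\tr\Delta)^2/D}$; for instance $K=2$, $D=4$, $\gamma=1$, $|\tr\Delta|=\|\Delta_0\|_2=1$ gives $1<\sqrt{5}/2$. So drop the reduction and run your construction on $\Delta$ itself.
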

\begin{proof}
  The first inequality above was already shown in~\cite{MWW}, but we repeat the
  proof since it is very simple:
  It uses a result of Barnum and Gurvits~\cite{BarnumGurvits},
  that for any Hermitian $X$ on a $K$-partite Hilbert space, if
  $\|X-\1\|_2 \leq 2^{1-\frac{K}{2}}$, then $X$ is separable.

  So, $\|2M-\1\|_2\leq 2^{1-\frac{K}{2}}$ implies that both
  $\1+(2M-\1)=2M$ and $\1+(\1-2M)=2(\1-M)$
  are separable, i.e.~$M$ and $\1-M$ are separable operators.
  Thus, for our Hermitian $\Delta$ on $\cH$:
  \[\begin{split}
    \|\Delta\|_{\mathbf{SEP}} &=    \max_{(M,\1-M)\in\mathbf{SEP}} \bigl| \tr\Delta(2M-\1) \bigr| \\
                              &\geq \max_{\|A\|_2 \leq 2^{1-\frac{K}{2}}} |\tr \Delta A| \\
                              &=    2^{1-\frac{K}{2}}\|\Delta\|_2,
  \end{split}\]
  where the last equality is by self-duality of the $\ell_2$-norm.

  \medskip
  To show the second inequality, notice that
  $(M,\1-M)$ being a two-outcome PPT POVM is a consequence of $M$ and $\1-M$
  being separable for any bipartition of the $K$ parties.

  Thus, we can use once more the Barnum-Gurvits result~\cite{BarnumGurvits}:
  If $\|2M-\1\|_2\leq 1$, then $M$ and $\1-M$ are both separable with respect to
  any bipartition, hence PPT with respect to any bipartition.
  The claim follows now as in the first part.
\end{proof}

\section{Comparison with trace norm and data hiding}
\label{sec:1-norm}
All measurement norms are trivially upper bounded by the trace norm
$\|\cdot\|_1$. In the other direction, the standard
$\|\Delta\|_1 \leq \sqrt{D}\|\Delta\|_2$ for operators $\Delta$ on a
$D$-dimensional Hilbert space, allows us to turn the $\ell_2$-norm
estimates from the previous section into lower bounds on
$\|\Delta\|_M$, which in turn provides a lower bound on
$\|\Delta\|_{\mathbf{LOCC}}
   \leq \|\Delta\|_{\mathbf{SEP}}
   \leq \|\Delta\|_{\mathbf{PPT}}$:
\begin{equation}
  \label{eq:data-hiding-bound}
  \|\Delta\|_M \geq \sqrt{\frac{1}{18}}^K\frac{1}{\sqrt{D}}\|\Delta\|_1,
\end{equation}
for any tensor product of $4$-design POVMs, $M$. These are non-trivial because
by now it is a classic result in quantum information that quantum states
allow for \emph{data hiding}~\cite{DiVincenzoLeungTerhal}: namely, on large
composite systems there exist states with orthogonal supports (hence perfect
distinguishability by a suitable measurement) that are nevertheless barely distinguishable by LOCC.

More of this below, but let us start with some simple observations:
That both the occurrence of the inverse square root of $D$, and the
exponential dependence of the lower bound on $K$ are not artifacts,
is shown by the following example.

\begin{proposition}
  \label{prop:u-tight-tracenorm}
  Consider the measurement $U_{\cH}:=U_{\cH_1}\otimes\cdots\otimes U_{\cH_K}$,
  the tensor product of the $K$ uniform (unitary invariant) POVMs on sub-systems
  $\cH_j$ with dimensions $d_j$, $j=1,\ldots,K$.
  There exists a Hermitian $\Delta\neq 0$ such that
  \[
    \|\Delta\|_{U_{\cH}} \leq \left(\sqrt{\frac{2}{\pi}}+o(1)\right)^K \frac{1}{\sqrt{D}}\|\Delta\|_1,
  \]
  where $o(1)$ is arbitrarily small for sufficiently large
  $d_{\min} = \min\{d_1,\ldots,d_K\}$.
\end{proposition}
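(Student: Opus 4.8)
The plan is to exploit the complete tensor-product structure of both the
candidate operator and the measurement $U_{\cH}$, so that every quantity
factorises across the $K$ parties and the problem reduces to a single-system
computation carried out independently on each $\cH_j$. Concretely, I would take
the same operator as in proposition~\ref{prop:u-tight}, namely
$\Delta = \bigotimes_{j=1}^K \Delta_j$ with
$\Delta_j = \frac12\proj{0} - \frac12\proj{1}$ on a two-dimensional subspace of
each $\cH_j$. The numerator $\|\Delta\|_{U_{\cH}}$ has already been evaluated in
proposition~\ref{prop:u-tight} to equal $2^{-K}$, so the whole task is to control
the denominator $\frac{1}{\sqrt{D}}\|\Delta\|_1$ and show that the resulting ratio
is at most $\left(\sqrt{2/\pi}+o(1)\right)^K$.

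First I would observe that, because $\Delta$ is a tensor product of Hermitian
operators, its trace norm factorises: $\|\Delta\|_1 = \prod_{j=1}^K \|\Delta_j\|_1$.
Here each $\Delta_j$ supported on the two-dimensional span of $\ket0,\ket1$ has
singular values $\tfrac12,\tfrac12$, so $\|\Delta_j\|_1 = 1$ and
$\|\Delta\|_1 = 1$. That makes the bound we must prove into a purely dimensional
statement, $2^{-K} \leq \left(\sqrt{2/\pi}+o(1)\right)^K D^{-1/2}$, i.e.
$\left(D\, 2^{-2}\right)^{1/2} \cdot \text{(stuff)} \leq \cdots$, which is far too
strong to hold with this choice of $\Delta$. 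So the genuine point is that the naive
two-dimensional $\Delta_j$ is the wrong object: to make $\frac{1}{\sqrt{D}}\|\Delta\|_1$
comparable to $\|\Delta\|_{U_{\cH}}$ one must let $\Delta_j$ live on the \emph{full}
$d_j$-dimensional space and pick it so that its single-system ratio
$\|\Delta_j\|_{U_{\cH_j}} \big/ \bigl(d_j^{-1/2}\|\Delta_j\|_1\bigr)$ approaches
the constant $\sqrt{2/\pi}$. Thus the real plan is to choose a good single-system
$\Delta_j$, prove the one-party estimate
\[
  \|\Delta_j\|_{U_{\cH_j}} \leq \left(\sqrt{\tfrac{2}{\pi}}+o(1)\right)\frac{1}{\sqrt{d_j}}\|\Delta_j\|_1,
\]
and then tensor it up.

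The single-system computation is where the work lies. For the uniform POVM
$U_{\cH_j}$, observation~\ref{obs:POVM-bias} gives
$\|\Delta_j\|_{U_{\cH_j}} = d_j\,\EE\,|\tr \Delta_j P|$, where $P=\proj{\psi}$ is a
Haar-random rank-one projector on $\cH_j$. Writing $\tr\Delta_j P = \bra{\psi}\Delta_j\ket{\psi}$,
the random variable $S = \bra{\psi}\Delta_j\ket{\psi}$ is a diagonal quadratic form in
the components of a uniform unit vector. The natural candidate for $\Delta_j$ is a
spectrally balanced traceless operator whose eigenvalue distribution makes
$S/\sigma$ asymptotically Gaussian, where $\sigma^2 = \EE S^2$; for a Gaussian
$\EE|S| = \sqrt{2/\pi}\,\sigma$, which is exactly where the constant $\sqrt{2/\pi}$
comes from. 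Since $U_{\cH_j}$ is an $\infty$-design, $\EE S^2$ is computed exactly as
in the proof of theorem~\ref{thm:2-designs-times-k}, giving
$\EE S^2 = \frac{1}{d_j(d_j+1)}\bigl(\tr\Delta_j^2 + (\tr\Delta_j)^2\bigr)$, so that
$d_j\sqrt{\EE S^2} \sim d_j^{-1/2}\|\Delta_j\|_2$ for large $d_j$. Meanwhile one
arranges $\Delta_j$ (e.g.\ equal-magnitude eigenvalues of random or alternating
signs, so that $\|\Delta_j\|_1 \approx \|\Delta_j\|_2\sqrt{d_j}\cdot d_j^{-1/2}$
matches up) so that $d_j^{-1/2}\|\Delta_j\|_1 \sim \|\Delta_j\|_2 / \sqrt{d_j}\cdot
\sqrt{d_j}$ and the ratio $\|\Delta_j\|_{U_{\cH_j}}\big/\bigl(d_j^{-1/2}\|\Delta_j\|_1\bigr)$
tends to $\sqrt{2/\pi}$. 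I would establish the requisite concentration by a
central-limit or moment argument for $S=\bra{\psi}\Delta_j\ket{\psi}$ over the
Haar measure, controlling the error term uniformly in $d_j$ to produce the $o(1)$.

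Finally, having the one-party bound with a ratio $\sqrt{2/\pi}+o(1)$ valid for each
$j$ (with $o(1)\to 0$ as $d_{\min}\to\infty$), the multiplicativity
$\|\Delta\|_{U_{\cH}} = \prod_j \|\Delta_j\|_{U_{\cH_j}}$,
$\|\Delta\|_1 = \prod_j\|\Delta_j\|_1$, and $D=\prod_j d_j$ immediately combine to
give the claimed product bound $\left(\sqrt{2/\pi}+o(1)\right)^K D^{-1/2}\|\Delta\|_1$.
The main obstacle I expect is the sharp asymptotic analysis of the single-system
expectation $\EE|\bra{\psi}\Delta_j\ket{\psi}|$: getting the constant $\sqrt{2/\pi}$
exactly (rather than merely up to an unspecified factor) requires a genuine
distributional limit theorem for this quadratic form under the Haar measure, with
enough uniform control to absorb everything into the $o(1)$ as $d_{\min}$ grows;
the tensorisation step is then purely formal.
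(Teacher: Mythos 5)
Your proposal, after its mid-course self-correction, lands on essentially the same proof as the paper: the paper also takes a flat-spectrum, traceless, full-rank local operator $\Delta_j = \frac{1}{d_j}P_j - \frac{1}{d_j}Q_j$ (with $P_j$ a rank-$\frac{d_j}{2}$ projector and $Q_j = \1 - P_j$, so $\|\Delta_j\|_1 = 1$), forms $\Delta = \bigotimes_{j=1}^K \Delta_j$, and multiplies the single-system bounds using the exact multiplicativity of both $\|\cdot\|_1$ and $\|\cdot\|_{U_{\cH}}$ on product operators under a product POVM. The only real difference is how the one-party estimate $\|\Delta_j\|_{U_{\cH_j}} \leq \bigl(\sqrt{2/\pi}+o(1)\bigr)d_j^{-1/2}\|\Delta_j\|_1$ is obtained: the paper simply cites \cite[Theorem 10]{MWW}, whereas you propose to derive it via a central-limit analysis of $S=\bra{\psi}\Delta_j\ket{\psi}$ over Haar-random $\ket{\psi}$. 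Your sketch is sound and produces the correct constant (writing $\ket{\psi}$ as a normalised complex Gaussian vector, $S$ is a ratio of a centred i.i.d.\ sum to a concentrating denominator, and $\EE|S| \to \sqrt{2/\pi}\,\sqrt{\EE S^2}$), but as you yourself flag, the uniform distributional limit theorem is the one nontrivial analytic step and you leave it unproven; since it is precisely the content of the cited theorem, the cleanest way to close your acknowledged gap is by citation rather than by proving a limit theorem. One bookkeeping slip worth fixing: for traceless $\Delta_j$ the $2$-design computation gives $d_j\sqrt{\EE S^2} = \sqrt{d_j/(d_j+1)}\;\|\Delta_j\|_2 \approx \|\Delta_j\|_2$, not $d_j^{-1/2}\|\Delta_j\|_2$ as you wrote; however, the structural requirement you extract from this — that $\Delta_j$ have flat spectrum so that $\|\Delta_j\|_2 = d_j^{-1/2}\|\Delta_j\|_1$, making the Gaussian heuristic yield the ratio $\sqrt{2/\pi}$ against $d_j^{-1/2}\|\Delta_j\|_1$ rather than merely against $\|\Delta_j\|_2$ — is exactly right, and your final accounting checks out.
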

\begin{proof}
  Without loss of generality, all $d_j$ are even. Pick any projector $P_j$
  of rank $\frac{d_j}{2}$ in $\cH_j$, $Q_j:=\1-P_j$, and let
  $\Delta_j := \frac{1}{d_j}P_j - \frac{1}{d_j}Q_j$, so that $\|\Delta_j\|_1=1$.

  Our candidate is $\Delta = \bigotimes_{j=1}^K \Delta_j$, which also has trace norm
  $1$. On the other hand, by~\cite[Theorem 10]{MWW} we have
  \[
    \|\Delta_j\|_{U_{\cH_j}} \leq \left(\sqrt{\frac{2}{\pi}}+o(1)\right) \frac{1}{\sqrt{d_j}},
  \]
  and since both $\Delta$ and the measurement share the tensor product
  structure, we obtain the claim by multiplying together these inequalities.
\end{proof}

\bigskip
That the factor of $\frac{1}{\sqrt{D}}$ does not go away when we
go to the class of all LOCC, and indeed all PPT measurements, is
contained in the two following theorems.

\begin{theorem}
  \label{thm:data-hiding1}
  When all the local dimensions are $d$, hence $D=d^K$, there
  exists a traceless Hermitian $\Delta \neq 0$ with
  \[
    \|\Delta\|_{\mathbf{PPT}} \leq \frac{2}{d^{\left\lfloor K/2 \right\rfloor}-1}\|\Delta\|_1.
  \]
\end{theorem}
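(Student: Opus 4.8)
The plan is to construct an explicit traceless Hermitian $\Delta$ that is perfectly distinguishable (hence $\|\Delta\|_1$ is large) but nearly invisible to any PPT measurement (hence $\|\Delta\|_{\mathbf{PPT}}$ is small). The natural candidates for data hiding are built from the symmetric and antisymmetric subspaces, which are the prototypical PPT-resistant states. The plan is to exploit the bipartite cut structure: since the $\mathbf{PPT}$ class requires $M$ and $\1-M$ to be PPT with respect to \emph{every} bipartition of the $K$ parties, it suffices to make $\Delta$ look like a bipartite data hiding state across one carefully chosen cut. I would split the $K$ parties into two groups of sizes $\lfloor K/2\rfloor$ and $\lceil K/2\rceil$, pair up the local Hilbert spaces across the cut, and on each matched pair build the standard Werner-type hiding operator proportional to the difference of normalised projectors onto the symmetric and antisymmetric subspaces of $\CC^d\otimes\CC^d$.

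Concretely, I would first recall the bipartite data hiding construction from~\cite{DiVincenzoLeungTerhal}: on a bipartite space $\CC^n\otimes\CC^n$, the two states $\rho_{\text{sym}}$ and $\rho_{\text{anti}}$ (normalised projectors onto $\Sym$ and its complement) are orthogonal, so $\|\rho_{\text{sym}}-\rho_{\text{anti}}\|_1=2$, yet their distinguishability by PPT measurements is governed by $\tr|M^\Gamma|$-type quantities and scales like $\frac{1}{n}$. Grouping the $K$ parties across the middle cut turns $\cH$ into a bipartite system $\CC^n\otimes\CC^n$-like structure with $n=d^{\lfloor K/2\rfloor}$ on the smaller side, and $\Delta:=\rho_{\text{sym}}-\rho_{\text{anti}}$ (or the appropriate analogue matching the two side-dimensions) is traceless with $\|\Delta\|_1=2$. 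The partial transpose $\Delta^\Gamma$ across this cut is proportional to the swap operator, whose operator norm is bounded, and this controls $\|\Delta\|_{\mathbf{PPT}}$ through the variational characterisation $\|\Delta\|_{\mathbf{PPT}}=\max_{(M,\1-M)\in\mathbf{PPT}}|\tr\Delta(2M-\1)|$, since $|\tr\Delta A|=|\tr\Delta^\Gamma A^\Gamma|\le \|\Delta^\Gamma\|_\infty\|A^\Gamma\|_1$ and PPT-ness of $M$ bounds the latter.

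The key steps, in order, are: (i) fix the balanced bipartition $[K]=S\sqcup S^c$ with $|S|=\lfloor K/2\rfloor$, setting $n:=d^{\lfloor K/2\rfloor}$; (ii) define $\Delta$ as the difference of symmetric and antisymmetric normalised projectors on the matched $n\times n$ block, confirming tracelessness and $\|\Delta\|_1=2$; (iii) compute the partial transpose across the cut and bound its operator norm, reducing the $K$-partite PPT constraint to the single worst-case bipartite cut; (iv) feed this into the duality formula for $\|\Delta\|_{\mathbf{PPT}}$ to extract the factor $\frac{1}{d^{\lfloor K/2\rfloor}-1}$, where the $-1$ arises from the dimensions of $\Sym$ versus its complement ($\binom{n+1}{2}$ versus $\binom{n}{2}$). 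The main obstacle I anticipate is step (iii)--(iv): controlling the PPT norm requires that PPT-ness across \emph{this particular} cut already forces the right bound, and bookkeeping the exact constant so that the symmetric/antisymmetric dimension gap yields $d^{\lfloor K/2\rfloor}-1$ rather than a looser denominator. I would handle this by a direct operator-norm estimate on $\Delta^\Gamma$ combined with $\tr|M^\Gamma|\le 1$ for a two-outcome PPT POVM, rather than attempting to optimise over all bipartitions simultaneously.
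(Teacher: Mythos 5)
Your construction is the same as the paper's: the difference of the normalised symmetric and antisymmetric projectors on the matched $n\times n$ block, $n=d^{\lfloor K/2\rfloor}$ (tensored with the maximally mixed state on the leftover party when $K$ is odd), is exactly the operator the paper builds from the permutation unitary $U_\pi$, with $\pi$ a product of $\lfloor K/2\rfloor$ disjoint transpositions, since $\frac12(\1+U_\pi)$ and $\frac12(\1-U_\pi)$ are precisely the projectors onto the symmetric and antisymmetric subspaces of that block (tensored with the identity on the leftover party). The genuine gap is in your steps (iii)--(iv), i.e.\ in how you bound $\|\Delta\|_{\mathbf{PPT}}$. The inequality you invoke, $\tr|M^{\Gamma}|\leq 1$ for a two-outcome PPT POVM $(M,\1-M)$, is false: PPT-ness says $0\leq M^{\Gamma}\leq\1$, hence $\tr|M^{\Gamma}|=\tr M^{\Gamma}=\tr M$, which can be as large as $D$ (take $M=\1$ or $M=\frac12\1$). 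What the PPT constraint genuinely controls is the \emph{operator} norm: for $A=2M-\1$ one has $-\1\leq A^{\Gamma_I}\leq\1$, i.e.\ $\|A^{\Gamma_I}\|_\infty\leq 1$, while $\|A^{\Gamma_I}\|_1$ can be of order $D$. So your H\"older split has the exponents on the wrong factors: with the correct bound $\|A^{\Gamma_I}\|_1\leq D$, your estimate reads $\|\Delta\|_{\mathbf{PPT}}\leq\|\Delta^{\Gamma_I}\|_\infty\, D$, and since for this $\Delta$ (even $K$, say) $\|\Delta^{\Gamma_I}\|_\infty=2/\sqrt{D}$, this gives $2\sqrt{D}$ --- worse than the trivial bound $\|\Delta\|_{\mathbf{PPT}}\leq\|\Delta\|_1=2$, so the argument as stated proves nothing.

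The repair, which is what the paper does, is to flip the split: $|\tr A\Delta|=|\tr A^{\Gamma_I}\Delta^{\Gamma_I}|\leq\|A^{\Gamma_I}\|_\infty\|\Delta^{\Gamma_I}\|_1\leq\|\Delta^{\Gamma_I}\|_1$, so the quantity to evaluate is the \emph{trace} norm of $\Delta^{\Gamma_I}$, not its operator norm. Note also that your remark that ``$\Delta^{\Gamma}$ is proportional to the swap operator'' is backwards: it is $\Delta\propto nF-\1$ that involves the swap $F$, whereas $\Delta^{\Gamma_I}\propto n^2\Phi-\1$ involves the rank-one maximally entangled projector $\Phi=\frac{1}{n}F^{\Gamma_I}$. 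Its spectrum is $n^2-1$ (once) and $-1$ (with multiplicity $n^2-1$), giving
\[
  \left\|\Delta^{\Gamma_I}\right\|_1=\frac{2}{n(n^2-1)}\cdot 2(n^2-1)=\frac{4}{n}\leq\frac{2}{n-1}\,\|\Delta\|_1,
\]
which is the claimed bound; the odd-$K$ case is identical after tensoring with $\frac1d\1$ on the unmatched party. Finally, beware your alternative first formulation as a tensor product of per-pair hiding operators $\bigotimes_i(\sigma_i-\alpha_i)$: there the same computation gives $\|\Delta^{\Gamma_I}\|_1=(4/d)^{\lfloor K/2\rfloor}=(2/d)^{\lfloor K/2\rfloor}\|\Delta\|_1$, missing the stated constant by a factor $2^{\lfloor K/2\rfloor-1}$; the single global symmetric/antisymmetric pair across the whole cut is essential.
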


In other words, one can find two states $\rho_0$ and $\rho_1$ with orthogonal
supports (i.e., $\frac12\rho_0-\frac12\rho_1$ has trace norm $1$), such that
\[
  \left\|\frac12\rho_0 - \frac12\rho_1 \right\|_{\mathbf{PPT}}
                                       \leq \frac{2\sqrt{d}^{\kappa}}{\sqrt{D}-\sqrt{d}^{\kappa}} \leq\frac{3\sqrt{d}^{\kappa}}{\sqrt{D}},
\]
where $\kappa = K \!\!\!\mod 2$ is the parity of $K$.
Hence these two states are data hiding in the sense of~\cite{DiVincenzoLeungTerhal}:
$\rho_i$ encodes a state between $K$ parties, but as long as those
are restricted to LOCC measurements (or more generally PPT measurements),
they have only a very slim chance of guessing this state. Indeed, the
probability of discriminating correctly $\rho_0$ from $\rho_1$ decreases
as the inverse square root of the total dimension $D$, and
eq.~(\ref{eq:data-hiding-bound}) shows that this order
of magnitude is essentially optimal, apart from a $K$-dependent constant.

\medskip
\begin{proof}
For all Hermitian $\Delta$,
\[\begin{split}
  \|\Delta\|_{\mathbf{PPT}} &= \max_{\left(\frac12(\1+A),\frac12(\1-A)\right)\in \mathbf{ PPT}} |\tr A\Delta| \\
                            &= \max_{\forall I\subset[K],-\1\leq A^{\Gamma_I}\leq\1} |\tr A\Delta|.
\end{split}\]
Yet, if $A$ is such that for $I\subset[K]$, $-\1\leq A^{\Gamma_I}\leq\1$, then necessarily
\[
  |\tr A\Delta| =    \left|\tr A^{\Gamma_I}\Delta^{\Gamma_I}\right|
                \leq \left\| A^{\Gamma_I} \right\|_{\infty} \left\|\Delta^{\Gamma_I}\right\|_1
                \leq \left\|\Delta^{\Gamma_I}\right\|_1.
\]
Among the operators for which we know how to evaluate the trace norm of any of their
partial transposes are the permutation operators $U_{\pi}$, $\pi\in\mathfrak{S}_K$.
Indeed, for all $I:=\{1,\ldots,p\}\subset[K]$ we have
\[
  U_{\pi}^{\Gamma_I}
     = \sum_{j_1,\ldots,j_K} \ket{j_{\pi(1)},\ldots,j_{\pi(p)},j_{p+1},\ldots,j_K}
                             \!\bra{j_1,\ldots,j_p,j_{\pi(p+1)},\ldots,j_{\pi(K)}}.
\]
Hence, letting $f(I,\pi):=|\{i\in I,\pi(i)\notin I\}|$, we get:
$\displaystyle{\left\|U_{\pi}^{\Gamma_I}\right\|_1 = d^{K-f(I,\pi)}}$.

Choosing as permutation $\pi$ the product of $\lfloor K/2\rfloor$ disjoint transpositions,
$\pi:=(1,\lfloor K/2\rfloor +1)\ldots(\lfloor K/2\rfloor,2\lfloor K/2\rfloor)$ (that decomposes therefore into $\lceil K/2\rceil$ disjoint cycles),
let us now consider the following traceless Hermitian $\Delta$:
\[\begin{split}
  \Delta &:= \frac{1}{d^K+d^{\lceil K/2\rceil}}(\1+U_\pi)-\frac{1}{d^K-d^{\lceil K/2\rceil}}(\1-U_\pi) \\
         &=  \frac{2}{d^{\lceil K/2\rceil}(d^{2\lfloor K/2\rfloor}-1)}
                 \left(d^{\lfloor K/2\rfloor}U_\pi-\1\right).
\end{split}\]
Note that $\Delta$ is the difference of the two orthogonal density operators
$\rho_0:=\frac{1}{d^K+d^{\lceil K/2\rceil}}(\1+U_\pi)$ and
$\rho_1:=\frac{1}{d^K-d^{\lceil K/2\rceil}}(\1-U_\pi)$, hence $\|\Delta\|_1=2$.

Furthermore, $I:=\{1,\ldots,\lfloor K/2\rfloor\}\subset[K]$ is such that
$f(I,\pi)=\lfloor K/2\rfloor$, so $\left\|U_\pi^{\Gamma_I}\right\|_1=d^{K-\lfloor K/2\rfloor}$,
and hence, after a straightforward calculation:
\[
  \left\|\Delta^{\Gamma_I}\right\|_1
    \leq \frac{2}{d^{\lceil K/2\rceil}(d^{2\lfloor K/2\rfloor}-1)}
         \left(d^{\lfloor K/2\rfloor}\left\|U_\pi^{\Gamma_I}\right\|_1+\left\|\1^{\Gamma_I}\right\|_1\right)
    \leq \frac{2}{d^{\lfloor K/2\rfloor}-1}\|\Delta\|_1.
\]
Thus, $\displaystyle{\|\Delta\|_{\mathbf{PPT}} \leq \left\|\Delta^{\Gamma_I} \right\|_1
                                               \leq\frac{2}{d^{\lfloor K/2\rfloor}-1}\|\Delta\|_1}$,
which is what we wanted to prove.
\end{proof}

\bigskip
Theorem~\ref{thm:data-hiding1} and eq.~(\ref{eq:data-hiding-bound}) show
that -- at least for even $K$ -- the best performance for $K$-party
data hiding is indeed a bias inversely proportional to the square root
of the dimension, with a constant factor only depending on $K$.
Here is another construction that works also for
odd number $K$ of parties, with possibly unequal local dimensions.

\begin{theorem}
  \label{thm:data-hiding2}
  When all the local dimensions $d_j$ ($1\leq j\leq K$) are such that there exists a
  $I\subset[K]$ such that ${\cA:=\bigotimes_{j\in I}\cH_j}$ and
  ${\cB:=\bigotimes_{j\in[K]\setminus I}\cH_j}$
  satisfy $\dim\cA=\dim\cB=\sqrt{D}$, then there exists a traceless Hermitian $\Delta \neq 0$ with
  \[
    \|\Delta\|_{\mathbf{PPT}} \leq \frac{2}{\sqrt{D}+1}\|\Delta\|_1.
  \]
\end{theorem}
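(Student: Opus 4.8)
The plan is to mimic the construction of Theorem~\ref{thm:data-hiding1}, but with the single balanced cut $\cA:\cB$ (both of dimension $n:=\sqrt D$) playing the role of the permutation $\pi$, and crucially to compute the resulting $\PPT$-norm essentially exactly rather than merely bounding it by a partial-transpose trace norm. Let $F$ be the swap operator on $\cA\ox\cB$, so that $\1+F=2\Pi_{\Sym}$ and $\1-F$ project onto the symmetric and antisymmetric subspaces. First I would set
\[
  \rho_0 := \frac{1}{D+\sqrt D}(\1+F), \qquad \rho_1 := \frac{1}{D-\sqrt D}(\1-F),
\]
two orthogonal density operators (using $\tr F=\sqrt D$, $\tr\1=D$), and take $\Delta:=\rho_0-\rho_1$. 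Since the supports are orthogonal, $\|\Delta\|_1=2$; a one-line simplification yields the closed form $\Delta=\frac{2}{\sqrt D(D-1)}\bigl(\sqrt D\,F-\1\bigr)$, exactly the analogue of the operator in Theorem~\ref{thm:data-hiding1}.

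The key point, and the reason the naive estimate falls short, is that bounding $\|\Delta\|_{\PPT}$ by $\|\Delta^{\Gamma_I}\|_1$ (as in Theorem~\ref{thm:data-hiding1}) only produces the constant $\frac{2}{\sqrt D}$, whereas we want the sharper $\frac{2}{\sqrt D+1}$. To gain the extra ``$+1$'' I would write $A=2M-\1$ as in the proof of Theorem~\ref{thm:data-hiding1} and first \emph{discard} all $\PPT$ constraints except those coming from $I=\emptyset$ and from the balanced cut $I$; dropping constraints enlarges the feasible set and so can only increase the maximum, giving the valid upper bound
\[
  \|\Delta\|_{\PPT}\ \leq\ \max\bigl\{\,|\tr A\Delta|\ :\ -\1\leq A\leq\1,\ -\1\leq A^{\Gamma_I}\leq\1\,\bigr\}.
\]
Both surviving constraints, as well as the objective, are invariant under the twirl $A\mapsto\int\!\mathrm dU\,(U\ox U)A(U\ox U)^\dg$: the objective because $\Delta\in\operatorname{span}\{\1,F\}$ commutes with every $U\ox U$, and the constraints because conjugation by $U\ox U$ becomes, after partial transpose, conjugation by $U^*\ox U$, which preserves $-\1\leq\cdot\leq\1$. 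Hence I may assume $A=a\1+bF$.

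Everything then collapses to a two-variable linear program, which I expect to be routine once set up. Using $F^2=\1$, $\tr F=\sqrt D$ and $\tr\1=D$ one finds $\tr A\Delta=2b$, while the spectral constraints read $|a\pm b|\leq1$ (from $-\1\leq A\leq\1$, splitting on the $\pm1$ eigenspaces of $F$) together with $|a+\sqrt D\,b|\leq1$ (from $-\1\leq A^{\Gamma_I}\leq\1$, using $F^{\Gamma_I}=\sqrt D\,\proj{\Phi^+}$ with $\ket{\Phi^+}$ maximally entangled across the cut). Maximising $|b|$ over $|a|+|b|\leq1$ and $|a+\sqrt D\,b|\leq1$ is attained on the edge $a=b-1$, giving $|b|\leq\frac{2}{\sqrt D+1}$ and hence $\|\Delta\|_{\PPT}\leq\frac{4}{\sqrt D+1}=\frac{2}{\sqrt D+1}\|\Delta\|_1$. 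The main obstacle is conceptual rather than computational: recognising that the sharp constant cannot come from the partial-transpose trace-norm bound alone, and that one must retain the base positivity constraint $-\1\leq A\leq\1$ and exploit the $U\ox U$ (Werner) symmetry to reduce the $\PPT$ optimisation to a tractable two-dimensional problem.
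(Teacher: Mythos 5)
Your construction is exactly the paper's: your $\rho_0,\rho_1$ coincide with the normalised projectors $\sigma=\frac{1}{\sqrt{D}(\sqrt{D}+1)}(\1+F)$ and $\alpha=\frac{1}{\sqrt{D}(\sqrt{D}-1)}(\1-F)$ onto the symmetric and antisymmetric subspaces across the balanced cut $\cA:\cB$ (note $D\pm\sqrt{D}=\sqrt{D}(\sqrt{D}\pm 1)$), and both proofs begin with the same relaxation, namely discarding every PPT constraint except those associated with the cut $\cA:\cB$. The only real difference lies in how the resulting bipartite quantity is handled: the paper simply cites the original data-hiding analyses~\cite{DiVincenzoLeungTerhal,EggelingWerner:data-hiding} for the equality $\|\Delta\|_{\mathbf{PPT}(\cA:\cB)}=\frac{2}{\sqrt{D}+1}\|\Delta\|_1$, whereas you re-derive the needed upper bound from scratch: a $U\ox U$ twirl (legitimate, since $\Delta\in\operatorname{span}\{\1,F\}$ commutes with $U\ox U$, and both retained constraints survive conjugation and averaging, the partial transpose turning $U\ox U$ into $U\ox\bar{U}$), reduction to $A=a\1+bF$ by Werner symmetry, and then an explicit two-variable linear program. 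Your LP is set up and solved correctly: $\tr A\Delta=2b$, the constraints read $|a|+|b|\leq 1$ and $|a+\sqrt{D}\,b|\leq 1$, and the optimum $|b|=\frac{2}{\sqrt{D}+1}$ is attained where both are tight, at $a=b-1$. You are also right that this extra work is genuinely needed for the stated constant: the cruder estimate $\|\Delta\|_{\PPT}\leq\|\Delta^{\Gamma_I}\|_1$ used in theorem~\ref{thm:data-hiding1} gives only $\frac{2}{\sqrt{D}}\|\Delta\|_1$ for this $\Delta$. What your route buys is a self-contained, elementary proof that makes the role of Werner symmetry explicit; what the paper's citation buys is brevity and the exact value of the bipartite PPT norm rather than just an upper bound. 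Either way, the theorem is correctly established.
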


\begin{proof}
Denoting by $F$ the swap operator between the Hilbert spaces $\cA$ and $\cB$,
we let $\sigma$ and $\alpha$ be the normalised projectors onto the symmetric and
antisymmetric subspaces of $\mathbb{C}^{\sqrt{D}}\otimes\mathbb{C}^{\sqrt{D}}$,
respectively: $\sigma:=\frac{1}{\sqrt{D}(\sqrt{D}+1)}(\1+F)$ and
$\alpha:=\frac{1}{\sqrt{D}(\sqrt{D}-1)}(\1-F)$.
We then consider the traceless Hermitian $\Delta:=\sigma-\alpha$.

Now, if a POVM is PPT across all possible bipartitions of $\cH$, it is in
particular PPT across the bipartition $\cA:\cB$. As a consequence,
\[
  \|\Delta\|_{\mathbf{PPT}} \leq \|\Delta\|_{\mathbf{PPT(\cA:\cB)}} = \frac{2}{\sqrt{D}+1}\|\Delta\|_1,
\]
where the last equality is the original quantum data hiding result,
as shown in~\cite{DiVincenzoLeungTerhal,EggelingWerner:data-hiding}.
\end{proof}

\section{Conclusion}
\label{sec:conclusion}
We have solved an open problem from~\cite{MWW}, showing that for any
number $K$ of parties, the measurement norm on Hermitian operators
defined by local $4$-designs is equivalent to a certain relative
of the Hilbert-Schmidt norm. The equivalence is in terms of constants
of domination which depend only on the number of parties, not on the
local dimensions.

\begin{figure}[h]
\fbox{
\begin{minipage}{13cm}
\begin{alignat*}{5}
\frac{1}{\sqrt{D}}\|\Delta\|_1 &\leq&
                       \|\Delta\|_2 &\leq&
                                        & &           \|\Delta\|_\PPT        &\leq& \|\Delta\|_1 \\
   & &
                           & &
                                        & &                 \vertleq\qquad              & & \\
2\sqrt{\frac12}^K\frac{1}{\sqrt{D}}\|\Delta\|_1 &\leq&
                       2\sqrt{\frac12}^K\|\Delta\|_2 &\leq&
                                        & &           \|\Delta\|_\SEP                & & \\
   & &
                           & &
                                        & &                 \vertleq\qquad              & & \\
   & &
                           & &
                                        & &              \|\Delta\|_\LOCC               & & \\
   & &
                           & &
                                        & &                 \vertleq\qquad              & & \\
\sqrt{\frac{1}{18}}^K\frac{1}{\sqrt{D}}\|\Delta\|_1 &\leq&
    \sqrt{\frac{1}{18}}^K \|\Delta\|_2 &\leq& \ \sqrt{\frac{1}{18}}^K \|\Delta\|_{2(K)}
                                        &\leq&         \ \|\Delta\|_{M_4^{(K)}} &\leq & \ \|\Delta\|_{2(K)}
\end{alignat*}
\vspace{1mm}
\end{minipage}
}
\caption{A schematic summary of the new and previously known relations for any Hermitian $\Delta$; $M_4^{(K)}$ denotes
         a generic tensor product of $K$ $4$-design POVMs.}
\label{fig:schematic}
\end{figure}

Note that our constants appear worse compared to the known inequalities
for $K=1$ and $K=2$: In the former case, \cite{AmbainisEmerson} gives
$\frac13$ whereas we get $\sqrt{\frac{1}{18}}$; in the latter, \cite{MWW}
gives $\frac{1}{\sqrt{153}}$ whereas we get $\frac{1}{18}$. While the
gap is small, it may to some degree be explained by the fact that
in both these cited papers the assumption $\tr\Delta = 0$ was made, and
exploited to simplify the fourth moment even more. We believe that there is
merit in transcending this restriction, as not in all applications
it can be justified.
In any case, we leave it as an open problem to find the optimal constants
of domination with respect to the $\|\cdot\|_{2(K)}$ norm.

On a single system, the relation between measurement norm
and $2$-norm was exploited in \cite{AmbainisEmerson} to show that
even approximate $4$-design POVMs are derandomizing,
with an application in quantum algorithms.
The extension to bi-partite systems subsequently found applications in entanglement
theory in \cite{BrandaoChristandlYard:sq} where it was used to describe an
algorithm that would decide in a quasipolynomial time whether a bipartite
state is separable or whether it is ``far away'' from the set of separable states.
This suggests that our generalised results,
for any number of parties and non necesarily traceless Hermitians, might be useful too.
Indeed, since the first circulation of this paper as a preprint, our bounds have
been applied in precisely such a context~\cite{BrandaoHarrow}.
We use the occasion to draw attention also to Montanaro's subsequent
paper~\cite{Montanaro}, in which he derives bounds quite similar to ours,
although by a completely different approach, namely using hypercontractive inequalities.
As these are much more general and powerful tools, the actual constants
obtained by Montanaro are worse than the present ones out of elementary reasoning,
but it is pleasing to note that our results, rather than being an ``accident''
or just the consequence of a trick, have a firm structural basis.

Via the non-commutative $\ell_2$-norm we then obtained
performance comparisons with the trace norm, revealing
at most a factor of the order of the inverse square root of the
dimension of the total Hilbert space between the measurement norm and
the trace norm. Since the measurement is a
particular LOCC strategy, we get lower bounds on the distinguishing
power of LOCC measurements. The bounds can be shown to be optimal
in their dimensional dependence, as we exhibited two constructions of
data hiding states which attain these bounds up to $K$-dependent
factors.

\begin{figure}[h]
\fbox{
\begin{minipage}{9cm}
\begin{alignat*}{5}
\exists\ \Delta\neq0:\ &\|\Delta\|_\PPT& &\leq\ \frac{2}{\sqrt{D}+1}\|\Delta\|_1& \\
\exists\ \Delta\neq0:\ &\|\Delta\|_{U^{(K)}}& &\leq\ \left(\sqrt{\frac{2}{\pi}}+\delta\right)^K\frac{1}{\sqrt{D}}\|\Delta\|_1& \\
\exists\ \Delta\neq0:\ &\|\Delta\|_{U^{(K)}}& &\leq\ \sqrt{\frac{1}{2}}^K\|\Delta\|_{2(K)}=\sqrt{\frac{1}{2}}^K\|\Delta\|_2&
\end{alignat*}
\vspace{1mm}
\end{minipage}
}
\caption{A schematic summary of some of the tightness results obtained for the lower bounds; $U^{(K)}$ denotes
         the tensor product of the $K$ uniform POVMs.}
\label{fig:schematicbis}
\end{figure}

Here, one remaining question is whether for odd
number $K$ of parties, all of which have equal dimension, the additional
factor of square root of the local dimension can be removed in
theorem~\ref{thm:data-hiding1}. On a related note, with respect to
theorem~\ref{thm:data-hiding2}, does there exist a universal constant $C>0$
such that for all sufficiently large $D$ one can find Hermitian
$\Delta \neq 0$ with
$\| \Delta \|_{\mathbf{PPT}} \leq \frac{C}{\sqrt{D}} \| \Delta \|_1$,
irrespective of the local dimensions?

Even more interesting would be to quantify the performance of LOCC,
or at least fully separable (SEP), measurements: Indeed, notice that
in theorems~\ref{thm:data-hiding1} and \ref{thm:data-hiding2}, we have
only exploited \emph{bi-}separability, and comparing with
theorem~\ref{thm:ppt-vs-2} we see that there remains only a factor of
at most $2$ to be gained as long as one is restricted to this weaker
constraint.
Is it possible to significantly improve this factor when judging the
performance of SEP or LOCC measurements? In particular, do there
exist constants $C > 0$ and $\alpha < 1$ such that for all $K$ and all
sufficiently large total dimensions $D$ there is a Hermitian
$\Delta \neq 0$ with
\[
  \| \Delta \|_{\mathbf{LOCC}} \leq C \frac{\alpha^K}{\sqrt{D}} \| \Delta \|_1,
  \ \text{ or even }\
  \| \Delta \|_{\mathbf{SEP}} \leq C \frac{\alpha^K}{\sqrt{D}} \| \Delta \|_1 \text{ ?}
\]

\acknowledgments
We thank Fernando Brand\~{a}o, Aram Harrow and
Ashley Montanaro for spurring our interest in generalising the $4$-design
results of~\cite{AmbainisEmerson,MWW} to multiple parties.
Their observations directed our attention towards $2$-norms as the
correct framework for understanding state discrimination with a fixed measurement.

This work was done as part of CL's research placement at the
University of Bristol.
AW is or was supported by the European Commission (STREP ``QCS'' and
Integrated Project ``QESSENCE''), the ERC (Advanced Grant ``IRQUAT''),
a Royal Society Wolfson Merit Award and a Philip Leverhulme Prize.
The Centre for Quantum Technologies is funded by the Singapore
Ministry of Education and the National Research Foundation as part
of the Research Centres of Excellence programme.



\appendix

\section{Proof of theorem~\ref{thm:4-designs-times-k} -- moment inequality}
\label{app:4-design-proofs}
Here we show the missing ingredient, the following moment inequality:
\begin{proposition}
\label{prop:4th-vs-2nd-moment}
For any Hermitian operator $\Delta$ on $\cH = \cH_1\otimes\cdots\otimes\cH_K$,
\begin{equation*}
    \tr\left( \Delta^{\ox 4}
              \left( \sum_{\underline{\pi}\in\mathfrak{S}_4^K} U_{\underline{\pi}} \right)
        \right)              \leq 18^K \left[ \sum_{I\subset[K]} \tr\left(\tr_I \Delta\right)^2 \right]^2.
  \end{equation*}
\end{proposition}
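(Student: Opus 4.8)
The plan is to re-use, but sharpen, the Cauchy--Schwarz ``splitting'' machinery already developed above for the weaker $24^K$ bound. Writing $t(\underline\pi)=\tr\bigl(\Delta^{\ox 4}U_{\underline\pi}\bigr)$ and $c_I:=\tr\bigl(\tr_I\Delta\bigr)^2\geq 0$, the quantity to bound is $\sum_{\underline\pi\in\mathfrak{S}_4^K}t(\underline\pi)$ and the target is $18^K\bigl(\sum_{I\subset[K]}c_I\bigr)^2$. The key point discarded in the simple argument is the full content of the repeated inequality~(\ref{eq:CS}): applying $\text{Split}$ and the arithmetic--geometric mean step coordinate by coordinate produces, for every $\underline\pi$, an honest \emph{convex combination} $|t(\underline\pi)|\leq\sum_\alpha p_\alpha^{(\underline\pi)}\,t(\underline\pi^{(\alpha)})$ with $\underline\pi^{(\alpha)}\in\mathfrak{A}_0^K$ and $\sum_\alpha p_\alpha^{(\underline\pi)}=1$, where all reduced traces are automatically non-negative (each being of the form $\tr XX^\dagger$). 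First I would sum these bounds over all $\underline\pi$ and exchange the order of summation, obtaining $\sum_{\underline\pi}t(\underline\pi)\leq\sum_{\underline\sigma\in\mathfrak{A}_0^K}w_{\underline\sigma}\,t(\underline\sigma)$, where $w_{\underline\sigma}=\sum_{\underline\pi}\sum_{\alpha:\,\underline\pi^{(\alpha)}=\underline\sigma}p_\alpha^{(\underline\pi)}$ records the total weight the reduction funnels onto $\underline\sigma$, with $\sum_{\underline\sigma}w_{\underline\sigma}=24^K$.

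Next I would invoke the bipartite estimate already used above: a generic $\underline\sigma\in\mathfrak{A}_0^K$ is determined by its ``identity support'' $I(\underline\sigma)=\{j:\sigma_j=\id\}$ together with a labelling of $[K]\setminus I$ by $(12)(34)$ or $(14)(23)$, and for each such $\underline\sigma$ the case ``$(2,2){:}(2,2)$'' of~\cite[Proof of Lemma 26]{MWW} gives $t(\underline\sigma)\leq c_{I(\underline\sigma)}^{\,2}$. Grouping the reduced terms by their identity support and setting $W_I:=\sum_{\underline\sigma:\,I(\underline\sigma)=I}w_{\underline\sigma}$, the chain of inequalities collapses to $\sum_{\underline\pi}t(\underline\pi)\leq\sum_{I\subset[K]}W_I\,c_I^{\,2}$. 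Since $\sum_{I}c_I^{\,2}\leq\bigl(\sum_I c_I\bigr)^2$, it then suffices to prove the uniform weight bound $W_I\leq 18^K$ for every $I\subset[K]$, and this is where essentially all the work lies.

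The hard part will be controlling the weights $W_I$ --- exactly the quantities the simpler argument deliberately threw away. My plan is to drive the reduction through a single transfer/counting step: conjugation-invariance under the diagonal subgroup $\mathfrak{G}$ normalises each coordinate, after which the splitting table (Fig.~\ref{fig:split-table}), read conjugacy-class by conjugacy-class over the five classes of $\mathfrak{S}_4$, dictates precisely how the $24$ permutations in a given coordinate redistribute their arithmetic-mean weight onto the three elements of $\mathfrak{A}_0$. The aim is to show each party contributes a multiplicative factor of at most $18$ rather than the crude $24$; because the diagonal conjugation couples the coordinates, this tensorisation over $j\in[K]$ is the delicate combinatorial heart of the argument, which I would organise either as an induction on $K$ that peels off one party at a time or as a transfer-matrix estimate on $\mathfrak{A}_0$. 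To confirm that $18$ is the right per-party constant, I would sanity-check it on a single system, where $\tr\bigl(\Delta^{\ox4}\sum_{\pi}U_\pi\bigr)=s_1^4+6s_2s_1^2+8s_3s_1+3s_2^2+6s_4$ in the power sums $s_\ell=\tr\Delta^\ell$ (so $s_1=\tr\Delta$); the elementary estimates $s_4\leq s_2^2$ and $|s_3|\leq s_2^{3/2}$ together with one arithmetic--geometric mean step give $\leq s_1^4+10s_2s_1^2+13s_2^2\leq 18\,(s_1^2+s_2)^2$ with room to spare, which both certifies $18$ as a safe per-party constant and isolates the genuine obstacle as the multi-party weight bookkeeping.
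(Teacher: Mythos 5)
Your reduction is internally consistent as far as it goes: the convex-combination form of the repeated splitting, the grouping by identity support, the use of the bipartite case ``(2,2):(2,2)'' of~\cite{MWW}, and the sufficiency of the uniform weight bound $W_I\leq 18^K$ are all correct, and your $K=1$ sanity check is accurate. But the proposal stops exactly where the proof has to begin: the bound $W_I\leq 18^K$ is never established, and it is not a routine verification. There are two concrete obstacles. First, the weights $w_{\underline\sigma}$ are not canonical --- they depend on which element of $\mathfrak{G}$ you conjugate by at each stage of the reduction, and since you never fix that procedure, $W_I$ is not yet even a well-defined quantity. Second, and more seriously, diagonal conjugation couples the $K$ coordinates, so a ``per-party factor of $18$'' does not tensorise. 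For example $\underline\pi=((14),(14),\ldots,(14))$, conjugated diagonally to $((12),\ldots,(12))$ and split once, sends weight $\tfrac12$ onto the all-identity tuple, whereas the product of single-site marginals would give only $2^{-K}$: correlated concentration of weight onto a single support $I$ is exactly the phenomenon that a site-by-site induction, or a single-site transfer matrix on $\mathfrak{A}_0$, cannot see, because the conjugation step is global. The bound may well be true (a rough $K=2$ tally suggests it holds with room to spare), but nothing in your plan proves it, and you yourself flag this as ``where essentially all the work lies.''

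The paper avoids precisely this difficulty, and that is where its proof differs from yours in an essential way. It performs only \emph{one} split (which requires no conjugation and is therefore exactly coordinate-wise), landing in $\mathfrak{A}^K$ with the seven-letter alphabet $\mathfrak{A}$, and then proves a new trace inequality, eq.~(\ref{bound}), valid for \emph{every} $\underline\sigma\in\mathfrak{A}^K$: the quantity $\tr\Delta^{\ox 4}U_{\underline\sigma}$ is bounded by $\bigl[\tr(\tr_{\cA\ox\cB}\Delta)^2\bigr]\bigl[\tr(\tr_{\cA\ox\cC}\Delta)^2\bigr]$, where $\cA,\cB,\cC$ collect the parties carrying $\id$, $(14)$, $(23)$ respectively. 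The proof of that inequality is analytic, not combinatorial: the trace is rewritten as $\tr\bigl[(\tr_{\cC\ox\cG}R)(\tr_{\cB\ox\cG}S)\bigr]$ with $R=\widetilde{P}\widetilde{P}^{\dagger}$ and $S=\widetilde{Q}\widetilde{Q}^{\dagger}$ positive semidefinite, and one uses $\tr VW\leq(\tr V)(\tr W)$. Once such a bound is available on all of $\mathfrak{A}^K$, the constant $18^K$ comes from an exact, uncoupled count: of the $24$ elements of $\mathfrak{S}_4$, exactly $6$ have $\sigma^L\in\{\id,(14)\}$ and $18$ do not, so each $I\subset[K]$ is hit by exactly $6^{|I|}18^{K-|I|}\leq 18^K$ tuples. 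In short, the missing ingredient in your route is not better bookkeeping of the $\mathfrak{A}_0$ reduction but an estimate like eq.~(\ref{bound}) that makes repeated splitting --- and hence all weight-tracking through conjugations --- unnecessary; without it, you would have to specify a reduction procedure and control the coupled weights it generates, a task at least as hard as the one the paper actually solves.
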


\begin{proof}
In the proof of theorem~\ref{thm:2-designs-times-k} we could easily calculate
\[
  \tr\left( \Delta^{\ox 2}
            \left( \sum_{\underline{\pi}\in\mathfrak{S}_2^K} U_{\underline{\pi}} \right)
     \right)
             = \sum_{I\subset[K]} \tr\left(\tr_I \Delta\right)^2,
\]
because $\mathfrak{S}_2$ only contains 2 elements.
Now, $\mathfrak{S}_4$ contains 24 elements, so to upper bound
\[
  \tr\left( \Delta^{\ox 4}
            \left( \sum_{\underline{\pi}\in\mathfrak{S}_4^K} U_{\underline{\pi}} \right)
     \right)
\]
we will find a way of restricting our attention to only a few of them without loss
of generality. A strategy to do so has already been described when proving the
weaker version of theorem \ref{thm:4-designs-times-k} in section~\ref{sec:2-norms}.

What we have shown there is that, for all $\underline{\sigma}\in\mathfrak{S}_4^K$:
\begin{equation}
\label{split}
  \left| \tr\left(\Delta^{\otimes 4}U_{\underline{\sigma}}\right) \right|
       \leq \frac{1}{2}\tr\left(\Delta^{\otimes 4}U_{\underline{\sigma}^L}\right)
           +\frac{1}{2}\tr\left(\Delta^{\otimes 4}U_{\underline{\sigma}^R}\right),
\end{equation}
with $\underline{\sigma}^L,\underline{\sigma}^R\in\mathfrak{A}^K
:=\{\id,(14),(23),(1234),(1432),(12)(34),(14)(23)\}^K$ given by the splitting map detailed
in the table in Fig.~\ref{fig:split-table}.

Consequently, in order to bound
$\left| \tr \Delta^{\otimes 4} U_{\underline{\sigma}} \right|$ for any
$\underline{\sigma}\in\mathfrak{S}_4^K$, it will be sufficient to bound it for
$\underline{\sigma}\in\mathfrak{A}^K$. Note that for the latter, the trace is
automatically real and non-negative.

Here, we do not use conjugation to reduce even further the number of permutations under consideration to those of $\mathfrak{A}_0:=\{\id,(12)(34),(14)(23)\}$, as was done to prove the weaker version of theorem \ref{thm:4-designs-times-k} in section~\ref{sec:2-norms}. So of course, finding an upper bound to $\left| \tr \Delta^{\otimes 4} U_{\underline{\sigma}} \right|$ for a generic $\underline{\sigma}\in\mathfrak{A}^K$ is a bit more technical than for a generic $\underline{\sigma}\in\mathfrak{A}_0^K$. But what we gain is that we keep track of ``which permutation splits into which pair of permutations'', so that eventually we can make use of an elementary combinatorial argument to get a slightly better factor than the previous $24^K$.

With this aim in view, let us first deal with the following auxiliary problem:\\
Let $\cH=\cA\otimes\cdots\otimes\cG$ be a (finite dimensional)
septempartite Hilbert space.
For a generic operator $X$ on $\cH$ and unit (typically: basis) vectors
$\ket{a},\ket{a'}\in\cA$, \ldots, $\ket{g},\ket{g'}\in\cG$, we denote by
$X_{a,\ldots,g}^{a',\ldots,g'}$ the matrix element
$\bra{a}\cdots\bra{g} X \ket{a'}\cdots\ket{g'}$.

Let $\underline{\sigma}=(\sigma_{\cA},\ldots,\sigma_{\cG})\in\mathfrak{S}_4^7$
be a septuple of permutations.
Now we have, with the $\ket{a_q},\ldots,\ket{g_q}$
($1\leq q\leq 4$) running over an orthonormal basis of $\cA,\ldots,\cG$,
respectively:
\[
  \tr \Delta^{\otimes 4}(U_{\sigma_{\cA}}\otimes\cdots\otimes U_{\sigma_{\cG}})
      = \underset{a_4,\ldots,g_4}{\underset{a_3,\ldots,g_3}{\underset{a_2,\ldots,g_2}{\underset{a_1,\ldots,g_1}{\sum}}}}
                               \prod_{q=1}^4 \Delta_{a_q,\ldots,g_q}^{a_{\sigma_{\cA}(q)},\ldots,g_{\sigma_{\cG}(q)}}.
\]

In the particular case of all the seven permutations in $\mathfrak{A}$,
$\sigma_{\cA}=\id$,
$\sigma_{\cB}=(14)$, $\sigma_{\cC}=(23)$, $\sigma_{\cD}=(1234)$,
$\sigma_{\cE}=(1432)$, $\sigma_{\cF}=(12)(34)$ and $\sigma_{\cG}=(14)(23)$, this becomes
\[\begin{split}
  \tr \Delta^{\otimes 4}(U_{\sigma_{\cA}}\otimes\cdots\otimes U_{\sigma_{\cG}})
         &= \underset{a_4,\ldots,g_4}{\underset{a_3,\ldots,g_3}{\underset{a_2,\ldots,g_2}{\underset{a_1,\ldots,g_1}{\sum}}}}
                  \Delta_{a_1b_1c_1d_1e_1f_1g_1}^{a_1b_4c_1d_2e_4f_2g_4}
                  \Delta_{a_2b_2c_2d_2e_2f_2g_2}^{a_2b_2c_3d_3e_1f_1g_3}
                  \Delta_{a_3b_3c_3d_3e_3f_3g_3}^{a_3b_3c_2d_4e_2f_4g_2}
                  \Delta_{a_4b_4c_4d_4e_4f_4g_4}^{a_4b_1c_4d_1e_3f_3g_1}\\
         &= \underset{b_4,d_4,\ldots,g_4}{\underset{c_3,\ldots,g_3}{\underset{c_2,\ldots,g_2}{\underset{b_1,d_1,\ldots,g_1}{\sum}}}}
                  \left[(\tr_{\cA\otimes\cC}\Delta)^{\Gamma_{\cE}}\right]_{b_1d_1e_4f_1g_1}^{b_4d_2e_1f_2g_4}
                  \left[(\tr_{\cA\otimes\cB}\Delta)^{\Gamma_{\cE}}\right]_{c_2d_2e_1f_2g_2}^{c_3d_3e_2f_1g_3} \\
         &\phantom{= \sum_p^K ====}
            \times \left[(\tr_{\cA\otimes\cB}\Delta)^{\Gamma_{\cE}}\right]_{c_3d_3e_2f_3g_3}^{c_2d_4e_3f_4g_2}
                   \left[(\tr_{\cA\otimes\cC}\Delta)^{\Gamma_{\cE}}\right]_{b_4d_4e_3f_4g_4}^{b_1d_1e_4f_3g_1},
\end{split}\]
where $\Gamma_{\cE}$ denotes the partial transposition on $\cE$.

We can rewrite this using the maximally entangled
$\displaystyle{\Phi_{\cF\ox\cF} = \sum_{ff'} \ketbra{ff}{f'f'}}$:

Letting $\cJ:=\cC\otimes\cD\otimes\cE\otimes\cG$,
$P:=\left(\tr_{\cA\otimes\cB}\Delta\right)^{\Gamma_{\cE}}$ and
$R:=(P\otimes\1_{\cF})(\1_{\cJ}\otimes \Phi_{\cF\otimes\cF})(P\otimes\1_{\cF})$,
we notice that, for all $j,j',f,f',\widetilde{f},\widetilde{f}'$:
\[
  R_{j,f,\widetilde{f}}^{j',f',\widetilde{f}'}
     = \underset{\widetilde{f}'',\widetilde{f}'''}{\underset{f'',f'''}{\underset{j'',j'''}{\sum}}}
           \left(P_{j,f}^{j'',f''}\delta_{\widetilde{f}''=\widetilde{f}}\right)
           \left(\delta_{j'''=j''}\delta_{\widetilde{f}''=f'',\widetilde{f}'''=f'''}\right)
           \left(P_{j''',f'''}^{j',f'}\delta_{\widetilde{f}'''=\widetilde{f}'}\right)
     =\sum_{j''}P_{j,f}^{j'',\widetilde{f}}P_{j'',\widetilde{f}'}^{j',f'}.
\]

Likewise, letting $\cK:=\cB\otimes\cD\otimes\cE\otimes\cG$,
$Q:=(\tr_{\cA\otimes\cC}\Delta)^{\Gamma_{\cE}}$ and
$S:=(Q\otimes\1_{\cF})(\1_{\cK}\otimes \Phi_{\cF\otimes\cF})(Q\otimes\1_{\cF})$,
we have for all $k,k',f,f',\widetilde{f},\widetilde{f}'$:
\[
  S_{k,f',\widetilde{f}'}^{k',f,\widetilde{f}}
       =\sum_{k''}Q_{k,f'}^{k'',\widetilde{f}'}Q_{k'',\widetilde{f}}^{k',f}.
\]

We now just have to make the following identifications:
\[\begin{split}
  \bullet\ & j:=(c_2,d_2,e_1,g_2),\ \ j':=(c_2,d_4,e_3,g_2),\ \ j'':=(c_3,d_3,e_2,g_3), \\
  \bullet\ & k:=(b_4,d_4,e_3,g_4),\ \ k':=(b_4,d_2,e_1,g_4),\ \ k'':=(b_1,d_1,e_4,g_1), \\
  \bullet\ & f:=f_2,\ \ f':=f_4,\ \ \widetilde{f}:=f_1,\ \ \widetilde{f}':=f_3,
\end{split}\]
and to notice that we can actually sum over $j''$ and $k''$ independently.
We thus get:
\[\begin{split}
  \tr \Delta^{\otimes 4}(U_{\sigma_{\cA}}\otimes\cdots\otimes U_{\sigma_{\cG}})
      &= \underset{b_4,d_4,f_4,g_4}{\underset{e_3,f_3}{\underset{c_2,d_2,f_2,g_2}{\underset{e_1,f_1}{\sum}}}}
                R_{c_2,d_2,e_1,g_2,f_2,f_1}^{c_2,d_4,e_3,g_2,f_4,f_3}
                S_{b_4,d_4,e_3,g_4,f_4,f_3}^{b_4,d_2,e_1,g_4,f_2,f_1} \\
      &= \underset{d_4,f_4}{\underset{e_3,f_3}{\underset{d_2,f_2}{\underset{e_1,f_1}{\sum}}}}
               \left(\tr_{\cC\otimes\cG}R\right)_{d_2,e_1,f_2,f_1}^{d_4,e_3,f_4,f_3}
               \left(\tr_{\cB\otimes\cG}S\right)_{d_4,e_3,f_4,f_3}^{d_2,e_1,f_2,f_1}  \\
      &= \tr_{\cD\otimes\cE\otimes\cF\otimes\cF}
               \left(\tr_{\cC\otimes\cG}R\right)\left(\tr_{\cB\otimes\cG}S\right).
\end{split}\]

Defining $\widetilde{P}:=(P\otimes\1_{\cF})(\1_{\cJ}\otimes\sum_f \ket{ff})$ and
$\widetilde{Q}:=(Q\otimes\1_{\cF})(\1_{\cJ}\otimes\sum_f \ket{ff})$,
we see that $R=\widetilde{P}\widetilde{P}^{\dagger}$ and $S=\widetilde{Q}\widetilde{Q}^{\dagger}$.
Hence $R$ and $S$ are positive semidefinite, and so are $\tr_{\cC\otimes\cG}R$
and $\tr_{\cB\otimes\cG}S$.
Thus, using the fact that, for positive semidefinite
$V$ and $W$, $\tr VW \leq (\tr V)(\tr W)$, we obtain
\[
  \tr_{\cD\otimes\cE\otimes\cF\otimes\cF}
      \left[\left(\tr_{\cC\otimes\cG}R\right)\left(\tr_{\cB\otimes\cG}S\right)\right]
  \leq \left(\tr_{\cC\ox\cD\ox\cE\ox\cF\ox\cF\ox\cG}R\right)
       \left(\tr_{\cB\ox\cD\ox\cE\ox\cF\ox\cF\ox\cG}S\right).
\]
On right hand side,
\[ \begin{split}
  \tr R &=\tr_{\cC\ox\cD\ox\cE\ox\cF\ox\cG} P^2 \\
        &=\tr_{\cC\ox\cD\ox\cE\ox\cF\ox\cG} \left((\tr_{\cA\otimes\cB}\Delta)^{\Gamma_{\cE}}\right)^2 \\
        &=\tr_{\cC\ox\cD\ox\cE\ox\cF\ox\cG} \left(\tr_{\cA\otimes\cB}\Delta\right)^2,
\end{split}\]
and likewise,
$\tr S=\tr_{\cB\ox\cD\ox\cE\ox\cF\ox\cG} \left(\tr_{\cA\otimes\cC}\Delta \right)^2$. So, we eventually arrive at
\begin{equation}
  \label{bound}
  \tr \Delta^{\otimes 4}(U_{\sigma_{\cA}}\otimes\cdots\otimes U_{\sigma_{\cG}})
       \leq \left[\tr_{\cC\ox\cD\ox\cE\ox\cF\ox\cG}\left(\tr_{\cA\otimes\cB}\Delta\right)^2\right]
            \left[\tr_{\cB\ox\cD\ox\cE\ox\cF\ox\cG}\left(\tr_{\cA\otimes\cC}\Delta\right)^2\right].
\end{equation}

With this inequality as a tool, we can now return to our initial problem:
For all $\underline{\pi}\in\mathfrak{A}^K=\{\id,(14),(23),(1234),(1432),(12)(34),(14)(23)\}^K$,
we can define the following factors of the global Hilbert space $\cH$:
\begin{align*}
  \cA(\underline{\pi}) &:= \bigotimes_{j\text{ s.t. }\pi_j=\id} \cH_j, \qquad
  \cB(\underline{\pi})  := \bigotimes_{j\text{ s.t. }\pi_j=(14)} \cH_j, \qquad
  \cC(\underline{\pi})  := \bigotimes_{j\text{ s.t. }\pi_j=(23)} \cH_j, \\
  \cD(\underline{\pi}) &:= \bigotimes_{j\text{ s.t. }\pi_j=(1234)} \cH_j, \qquad
  \cE(\underline{\pi})  := \bigotimes_{j\text{ s.t. }\pi_j=(1432)} \cH_j, \\
  \cF(\underline{\pi}) &:= \bigotimes_{j\text{ s.t. }\pi_j=(12)(34)} \cH_j, \qquad
  \cG(\underline{\pi})  := \bigotimes_{j\text{ s.t. }\pi_j=(14)(23)} \cH_j,
\end{align*}
so that clearly,
$\cH=\cA(\underline{\pi})\otimes \cB(\underline{\pi})\otimes \cC(\underline{\pi})
             \otimes \cD(\underline{\pi})\otimes \cE(\underline{\pi})\otimes \cF(\underline{\pi})
             \otimes \cG(\underline{\pi})$.
Hence, using successively the two inequalities (\ref{split}) and
(\ref{bound}), we have:
\[\begin{split}
  \sum_{\underline{\sigma}\in\mathfrak{S}_4^K}
                 \tr \left( \Delta^{\otimes 4}U_{\underline{\sigma}} \right)
     &\leq \sum_{\underline{\sigma}\in\mathfrak{S}_4^K}
      \left\{ \frac{1}{2}\tr\left(\Delta^{\otimes 4}U_{\underline{\sigma}^L}\right)
       +\frac{1}{2}\tr\left(\Delta^{\otimes 4}U_{\underline{\sigma}^R}\right) \right\} \\
     &\leq \sum_{\underline{\sigma}\in\mathfrak{S}_4^K}
       \Biggl\{\frac{1}{2}
        \left[\tr\left(\tr_{\cA(\underline{\sigma}^L)\otimes\cB(\underline{\sigma}^L)}\Delta\right)^2\right]
        \left[\tr\left(\tr_{\cA(\underline{\sigma}^L)\otimes\cC(\underline{\sigma}^L)}\Delta\right)^2\right]
       \Biggr.                                                                                               \\
     &\phantom{=====}
      + \Biggl. \frac{1}{2}
        \left[\tr\left(\tr_{\cA(\underline{\sigma}^R)\otimes\cB(\underline{\sigma}^R)}\Delta\right)^2\right]
        \left[\tr\left(\tr_{\cA(\underline{\sigma}^R)\otimes\cC(\underline{\sigma}^R)}\Delta\right)^2\right]
       \Biggr\}                                                                                              \\
     &= \sum_{\underline{\sigma}\in\mathfrak{S}_4^K}
         \left[\tr \left(\tr_{\cA(\underline{\sigma}^L)\otimes\cB(\underline{\sigma}^L)}\Delta\right)^2\right]
         \left[\tr \left(\tr_{\cA(\underline{\sigma}^L)\otimes\cC(\underline{\sigma}^L)}\Delta\right)^2\right]\\
     &\leq \sum_{\underline{\sigma}\in\mathfrak{S}_4^K}
       \left\{ \frac{1}{2}
       \left[\tr\left(\tr_{\cA(\underline{\sigma}^L)\otimes\cB(\underline{\sigma}^L)}\Delta\right)^2\right]^2
       + \frac{1}{2}
       \left[\tr\left(\tr_{\cA(\underline{\sigma}^L)\otimes\cC(\underline{\sigma}^L)}\Delta\right)^2\right]^2
       \right\}                                                                                              \\
     &= \sum_{\underline{\sigma}\in\mathfrak{S}_4^K}
         \left[\tr\left(\tr_{\cA(\underline{\sigma}^L)\otimes\cB(\underline{\sigma}^L)}\Delta\right)^2\right]^2,
\end{split}\]
where in the last lines we have made use of the symmetry between $\underline{\sigma}^L$
and $\underline{\sigma}^R$ on the one hand, and that between $\cB(\underline{\sigma}^L)$
and $\cC(\underline{\sigma}^L)$ on the other, when $\underline{\sigma}$ ranges over
$\mathfrak{S}_4^K$.

This concludes the main trace estimate. The rest of the argument is combinatorial.
Observe that among the $24$ permutations $\sigma$ of $\mathfrak{S}_4$, two are such that
$\sigma^L=\id$ [namely $\id$ and $(34)$], and four are such that $\sigma^L=(14)$
[namely $(14)$, $(13)$, $(134)$ and $(143)$].
Hence, for all subsets $I \subset [K]$, there are $6^{|I|}\times 18^{K-|I|}$
$K$-tuples of permutations $\underline{\sigma}$ such that $\sigma_i^L$ is either
$\id$ or $(14)$ for $i\in I$, and $\sigma_i^L$ is neither $\id$ nor $(14)$ for
$i\not\in I$. Therefore, we finally obtain:
\[
  \tr \left(\Delta^{\otimes 4}
            \left(\sum_{\underline{\sigma}\in\mathfrak{S}_4^K} U_{\underline{\sigma}}\right)\right)
      \leq 18^K \sum_{I\subset[K]} \left[\tr \left(\tr_I\Delta\right)^2\right]^2
      \leq 18^K \left[ \sum_{I\subset[K]} \tr \left(\tr_I\Delta\right)^2 \right]^2,
\]
which is what we wanted to prove.
\end{proof}

\end{document}